\theoremstyle{definition}
\theoremstyle{theorem}
\theoremstyle{prop}
\newtheorem{prop}{Proposition}
\theoremstyle{lemma}
\theoremstyle{remark}
\newcommand{\norm}[1]{\left\lVert#1\right\rVert}
\newcommand{\CE}{\color{black}} % black End of Change
\newcommand{\CR}{\color{black}} %red Need for Addition 
\newcommand{\CG}{\color{black}} %cyan Section Position Changed
\newcommand{\CK}{\color{black}}%blue
\begin{document}

%\title{Performance of Queueing Models for MISO Scheduling in Wireless Content-Centric Networks}
%\title{Performance of Queueing Models for MISO Content-Centric Networks}
%\title{\CG Queueing Theoretic Models for MU-MIMO Content-Centric Networks: Cross-Layer Designs for Joint Queueing and Beamforming}
\title{\CG Queueing Theoretic Models for Multiuser MISO Content-Centric Networks with SDMA, NOMA, OMA and Rate-Splitting Downlink}

\author{Ramkumar Raghu} 
\author{Mahadesh Panju} 
\author{Vinod Sharma}  
\affil{Indian Institute of Science, Bangalore, INDIA. \textit{\{ramkumar,mahadesh,vinod\}@iisc.ac.in}}
\maketitle
\begin{abstract} 
\CG Multiuser, Multiple Input, Single Output (MU-MISO) systems are proving to be indispensable in the next generation wireless networks such as 5G and 6G. The spatial diversity of MISO systems have been leveraged in physical layer designs in these wireless systems to improve the capacity. Several recent studies have utilised redundancies in the content request along with the spatial diversity of a MISO system to improve the capacity further. It is shown that Max-Min Fair (MMF) Beamforming schemes for MISO based on SDMA, NOMA, OMA and Rate-Splitting could be used to improve the content delivery rates. However, in most of these studies the key aspects such as the queueing delays in the downlink and the user dynamics have generally been ignored. In this work, we study how the interplay between queueing, beamforming and the user dynamics affects the Quality-of-Service (user experienced delay) of downlink in MU-MISO content centric networks (CCNs). We propose queueing theoretic models that are simple in nature and can be directly adapted to MU-MISO CCNs to perform optimal multi-group multicast downlink transmissions. We show that a recently developed Simple Multicast Queue (SMQ) for SISO systems can be directly used for MU-MISO systems and that it provides superior performance due to its always-stable nature. Further, we observe that MMF Beamforming schemes coupled with SMQ can be quite unfair to users with good channels. Thus, we propose an improvement to SMQ called Dual SMQ which addresses this issue. We also provide theoretical analysis of the mean delay experienced by the users in such MU-MISO CCNs\CE. %\CE MISO networks have garnered attention in wireless content-centric networks due to the additional degrees of freedoms they provide\CK. Several beamforming techniques based on NOMA, OMA, SDMA and Rate splitting have been proposed for such networks\CE. These techniques utilise the redundancy in the content requests across users and leverage the spatial multicast and multiplexing gains of multi-antenna transmit beamforming to improve the content delivery rate. However, queueing delays and user traffic dynamics which significantly affect the performance of these schemes, have generally been ignored. We study queueing delays in the downlink for several scheduling and beamforming schemes in content-centric networks, with one base-station possessing multiple transmit antennas\CE. These schemes are studied along with a recently proposed Simple Multicast Queue, to improve the delay performance of the network. \CE This work is particularly relevant for content delivery in 5G, 6G and eMBB networks\CE.
\end{abstract}
\begin{keywords}
MISO, Scheduling, Multigroup-Multicasting, Multiplexing, Quality of Service, Queueing Delay, SDMA, NOMA, Rate Splitting, Max-Min Fairness.
\end{keywords}
\section{Introduction}
\CG Proliferation of the Over-the-Top (OTT) platforms such as Netflix, Amazon Prime Video, Youtube etc., has increased the demand for high definition (HD) video contents over wireless networks \cite{CVNI}. It is seen that, depending on the geographic location, a finite subset of the contents hosted by these platforms can sometimes become highly popular \cite{itube}. The wireless base stations (BS) receive repeated requests for these contents from different users in a localised geographic location \cite{Netflix}. Such characteristics of request traffic in the current generation wireless networks have spawned an array of studies focused on Content Centric Networks (CCNs), to leverage such redundancies and improve the network capacity. Further, the broadcast nature of wireless transmitters has been exploited by many of these studies, to provide improvements to both the physical layer and the network layer, by serving the redundant requests in a multigroup-multicast manner. 

In physical layer, multiple antenna, MISO (multiple input, single output) based designs \cite{MMMISO,RS2User,RS2017,MaliMISO,KhalajMISO,precodersurvey} have gained traction due to increased delivery rates they provide in multigroup, multicast transmissions. These schemes utilise high degrees of spatial diversity provided by multiple antennas. The diversity is two fold: First the diversity comes from the broadcast nature of the wireless channel. Multiple antennas in MISO systems accentuate this advantage by enabling simultaneous multicast of different contents. The second diversity comes from the enhanced spatial multiplexing capabilities provided by the multiple antennas. In other words, more than one content can be delivered to the same user (e.g., small base station) simultaneously. Naturally, MISO (and MIMO) has become the cornerstone of 5G, 6G and eMBB (Enhanced Mobile Broadband) networks. 

However many of these works study the physical layer performance in MISO CCNs in isolation. There are other aspects of networks which affect the performance/QoS in a wireless networks such as queueing, user dynamics etc. The importance of cross-layer design with queues in a multiuser MIMO network and the effect of user dynamics was identified in early works, \cite{crosslayer}. However, cross-layer studies of MISO networks, particularly with queues, is seemingly limited, even today. In this paper, we take a holistic approach to the analysis of multigroup-multicasting in a MISO network. We study the interplay between queueing, multiantenna beamforming and user dynamics and the effect it has on the network performance. Unlike, the physical layer studies which study the delivery rate, we concentrate on the user experienced QoS, namely the user's mean delay (sojourn time). In a practical system we will see that the queueing design plays an important role in reducing the delay, as queueing is what controls the scheduling of service transmissions, to a time-varying set of users and their requests.

In the following we provide the relevant literature survey. In \cite{MMMISO}, %which is one of the initial studies in multigroup multicasting, the authors use Max-Min Fair (MMF) beamforming
the authors showthat in a MISO network, that the sum rate maximization, has no QoS or Fairness guarantees. Thus \cite{MMMISO} proposed MMF beamforming that helps in getting QoS (SINR, Rate) guarantees and provide fairness to users with bad channels. Lately, in MISO literature, MMF beamforming has become a norm. Further more, multiple schemes based on non-orthogonal multiple access (NOMA), orthogonal multiple access (OMA), space division multiple access (SDMA) and rate splitting (RS) have also been studied, \cite{RS2User,RS2017, precodersurvey}. Among these schemes, \cite{RS2User} showed that the Rate-Splitting based MMF beamforming has superior performance in terms of content delivery rate (particularly in overloaded conditions). Performance of RS in overloaded Multigroup-Multicast scenario with max-min fairness has been studied in \cite{RS2017} with similar results as in \cite{RS2User}, for a larger system. 
%Rate-Splitting with a common message to all users is studied in \cite{RSCom}. 
%These studies show that RS has similar or better performance (particularly in overloaded condition) than the other beamforming schemes considered in the papers. 
We revisit this conclusion and study MMF beamforming schemes from a cross-layer, queueing perspective. We show that the formulation of RS needs modification in a practical system with queues and that this modification gives a different picture of performance, for different beamforming schemes. 
Rate-Splitting with a common message to all users is studied in \cite{RSCom}. Joint beamforming and coded caching techniques for MISO content centric networks (CCN) has been studied in \cite{MaliMISO,KhalajMISO}. In \cite{MaliMISO} the authors consider caches both at the transmitters and the receivers and propose a joint beamforming that leverages additional degrees of freedom provided by coded caching. Schemes in \cite{KhalajMISO} improve over \cite{MaliMISO} by controlling the group sizes. For a more comprehensive list of beamforming schemes, see \cite{mimosurvey,precodersurvey}.

All of these studies either consider delivery to fixed non-intersecting groups of users or deliver different common message(s) separately to the users from multiple groups. But, in CCNs it is possible that there are time-varying combinations of common users across different groups. We propose adaptations of some of the beamforming schemes mentioned above to cater to this requirement. %Further, a common theme in \cite{RS2User,RS2017,RSCom,MaliMISO,KhalajMISO} and related works is Max-Min Fairness (MMF). 
%Further, as mentioned earlier, none of these works consider the effect of queueing at the BS. 
We show that Max-Min Fairness (MMF), which is a common theme in the above mentioned works, degrades performance for the users with good channels in a queueing system. We address this issue in this paper with an appropriate queueing architecture. 

There are a handful of works which consider queueing in MIMO networks. Queueing Models for Multi-Rate, Multi-User MIMO systems were studied in \cite{mrmu}. The work considers a FIFO queue with finite buffer size and batch service. As a result many user requests go unserviced due to buffer overflow as the traffic increases. In our work, none of the user requests are ever dropped. In \cite{queuestable}, a queueing system with an infinite buffer queue for each user is proposed for MIMO networks. Similar queues for Multiuser MIMO systems with queue-aware scheduling are also proposed in \cite{Qaware}. Queueing specific to SDMA based beamforming schemes is studied in \cite{huang2012stability}. Unlike our system, the proposed queues in these works are not stable across arrival rates due to the nature of queue management. Further, the queues in \cite{huang2012stability,queuestable,Qaware}, are not content-centric, hence, do not leverage the redundancies in the requests. In \cite{delayperform}, the authors analyse delay quantiles in addition to average delay for a multi-user MIMO system. Work in \cite{largedevqueue} performs joint queueing and beamforming decisions in two stages that involve probabilistic selection of users and queue state dependent scheduling. It is shown that this two stage operation improves queue length and channel feedback performance. However, \cite{delayperform} and \cite{largedevqueue}, do not consider multi-group, multicast transmissions as in our work and hence, are inherently not efficient for CCNs. Also, the queues in these works are also not always stable. In fact, instability of queues in these queueing systems arises due to a common drawback. All these works have different queues for different users which reduces the multicast opportunities enabled by the redundancy in the content. Infact we have shown that such queues have inferior performance in the content-centric networks, as seen in our previous work \cite{TWC2021}. Hence, it behoves to design queueing systems, that are content-centric in nature, even for a MIMO system. That is, the queue(s) must exploit the redundancies in the content and perform content dependent scheduling.

To the best of our knowledge, the only alternative to \cite{TWC2021}, for content centric queues is \cite{zhou2017optimal}. In \cite{zhou2017optimal} a dynamic multicast scheduling is proposed, which minimizes queueing delay and power using reinforcement learning (RL). Scheduling in \cite{zhou2017optimal} is state-dependent and is not scalable with system size due to computational complexity. Hence it is not suitable for MIMO systems. We point out that, though the queueing system in \cite{zhou2017optimal} is content centric, the stability across all arrival rates is not guaranteed. Further, \cite{TWC2021, Arxiv2019} propose simple multicast queueing schemes and reinforcement learning based power control, which are shown to outperform other similar queueing schemes in literature. In addition, the Simple Multicast Queue (SMQ) proposed in \cite{TWC2021} is analytically guaranteed to be always stable across any arrival rate. 

Readers are referred to \cite{mimosurvey} for a comprehensive list of queueing schemes for MIMO systems.%Thus the SMQ serves as a good candidate for MISO Multigroup Multicast System considered in this paper. We also note that there are other works, \cite{mimosurvey}, which propose alternate fairness criteria for physical layer beamforming\CE. However, we will see that MMF beamforming is the optimal scheme for SMQ\CE.

Since, the Simple Multicast Queue (SMQ) proposed for SISO networks is simple, scalable, content-centric and stable across all arrival rates, it is natural to consider it for MISO setup as well. However, in MISO systems with MMF Beamforming, when user channel statistics are heterogenous (where %some users have good channel statistics and the others have bad channel statistics, 
there are different sets of users with good and bad channel statistics), we will see that the adaptations of SMQ in \cite{TWC2021} such as Loopback, Defer \cite{panju19}, and Power control in time \cite{Arxiv2019} are no longer directly useful. % for MMF beamforming schemes. 
We will show that though this combination of MMF schemes and SMQ, may provide good overall mean delay for the network, it may penalise good channel users in a heterogenous network. Thus, in this paper, we propose modifications to SMQ to address this issue.  

Following are our main contributions in this work:  
%In this paper we further explore the performance of SMQ in MISO setup. Particularly we make following important  contributions in this work:
\begin{itemize}
\item We consider a recently proposed SMQ \cite{TWC2021} and show that it can be directly adapted to MISO CCNs. The aim here, is to provide a queueing architecture for MISO systems which is easy to implement and computationally simple, yet provide significantly better performance than existing solutions.
\item We study various beamforming schemes studied in literature \cite{RS2017,RS2User} from a cross-layer, queueing perspective. We provide necessary adaptations to these beamforming schemes, required for queueing and transmission to time-varying and common users across different multicast groups. We show that the performance of RS differs significantly from \cite{RS2017,RS2User} in such a setup.%We provide unique observations on the performance of these schemes in wireless content-centric networks with queues.
\item We show that, in a heterogenous network, MMF beamforming schemes can be pretty unfair to the users with good channel statistics. We propose a novel two queue architecture, named Dual Simple Multicast Queue (DSMQ), to address this issue. We show that DSMQ, provides a decoupled, fair QoS to users with good channels, in a network with heterogeneous channels. The users with bad channel gains also may not suffer much.
\item We provide theoretical analysis and insights to the proposed queues. 
\begin{itemize}
  \setlength{\itemindent}{0em}
  \item First, we prove the stationarity of SMQ and DSMQ and show that they are always stable even in MISO setup.
  \item Second, we derive queueing theoretical approximations to the user experienced delay (mean sojourn time) in a Multiuser MISO CCNs with queues and MMF beamforming. These approximations serve as tools to system engineers, to evaluate the cross-layer aspects of queueing and beamforming, without computationally expensive and time consuming simulations.  
\end{itemize}

\item Finally, we present extensive performance study via simulations, for the queueing and MMF beamforming schemes considered in this paper. Here, we also show that for Multiuser MISO CCNs, the schemes developed for SISO systems such as Loopback, Defer \cite{panju19,TWC2021} and reinforcement learning based Power control in time \cite{Arxiv2019} are not useful. This is because, the time diversity utilised by these schemes in a SISO system is compensated by the spatial diversity of the MISO system. %This is a great advantage of multiple antennas, particularly with power control in time, because the RL power control \cite{Arxiv2019}, is computationally very expensive as it needs channel gains at the BS for multiple antennas.
%\item We show via simulation that \CE reinforcement learning \CE based power control in time, which was very useful in single antenna case \cite{Arxiv2019}, provides no gain for multi-antenna systems if there are sufficient number of antennas. This is a great advantage of multiple antennas because power control, though useful, needs channel gains at the BS and can be computationally very expensive, especially for multiple antennas\CE. 

%\item We propose an extension of SMQ to provide differentiated QoS in networks where users with both good and bad channel are present. 
%\item We show that appropriate power allocation in time under average power constraint, helps improve the performance of the MISO system.
\end{itemize}

Rest of the paper is organised as follows. Section \ref{sec:system_model} describes the system model, assumptions and SMQ. Section \ref{sec:beamf} presents the beamforming schemes considered in this paper. Section \ref{sec:DSMQ} describes a new type of queue called Dual Simple Multicast Queue (DSMQ) which improves over SMQ in terms of fairness. Section \ref{sec:station} provides proof of stationarity for SMQ and DSMQ. In Section \ref{sec:smq_theory} we present the theoretical approximations for mean delay (sojourn time) for SMQ and DSMQ in the MISO system. Section \ref{sec:sim} provides simulation results and compares different schemes. Finally, Section \ref{sec:conclusion} concludes the paper.

\textit{Notation: $\{.\}^H$, $\{.\}^T$ represent Hermitian and Transpose operations respectively, $[N]$ represents the set of natural numbers upto $N$, $\norm{.}_2$ represents $\mathcal{L}_2$ norm, $diag(\textbf{g})$ represent a diagonal matrix formed by elements of vector $\textbf{g}$.}\CE
\section{System Model}
\label{sec:system_model}
We consider a wireless content-centric network with one BS endowed with $L$ transmit antennas and $K$ user equipments (UE). The system model is shown in Fig. \ref{fig:MISO}. Each UE requests contents from a library of $N$ files. Each file is of size $F$ bits and the receiver bandwidth is, $B$ MHz. In practical networks the UEs can be either a mobile user or a small BS (SBS). 
The channel between transmit antennas and a UE follows flat fading. In other words the channel stays constant for the duration of each file transmission and independently changes in the next transmission. 
The request traffic from each user follows Independent Reference Model (IRM). In IRM, the request process of each content $n\in[N]$ from each UE $k$, is an independent Poisson process with rate $\lambda_{nk}$. The overall rate of request traffic to the BS is given as $\lambda=\sum_{n,k}\lambda_{nk}$. These requests are queued and served by the BS using SMQ, \cite{TWC2021}.%Given a user $k\in[K]$, the $\lambda_{nk}$ is defined by a Zipf distribution with parameter $\gamma_k$ on content library. That is if $\lambda_k$ is the traffic from user $k$, then, $\lambda_{nk}=p_{nk}\lambda_k$, where $p_{nk}\propto 1/n^{\gamma_k}$. However, we note that the popularity is not restricted to Zipf and can be more general. For simplicity we assume $\gamma_k=\gamma$ and $\lambda_k=\lambda/K$. 
\begin{figure}[!h]
\centering
\includegraphics[height=6.5cm,width = 7cm]{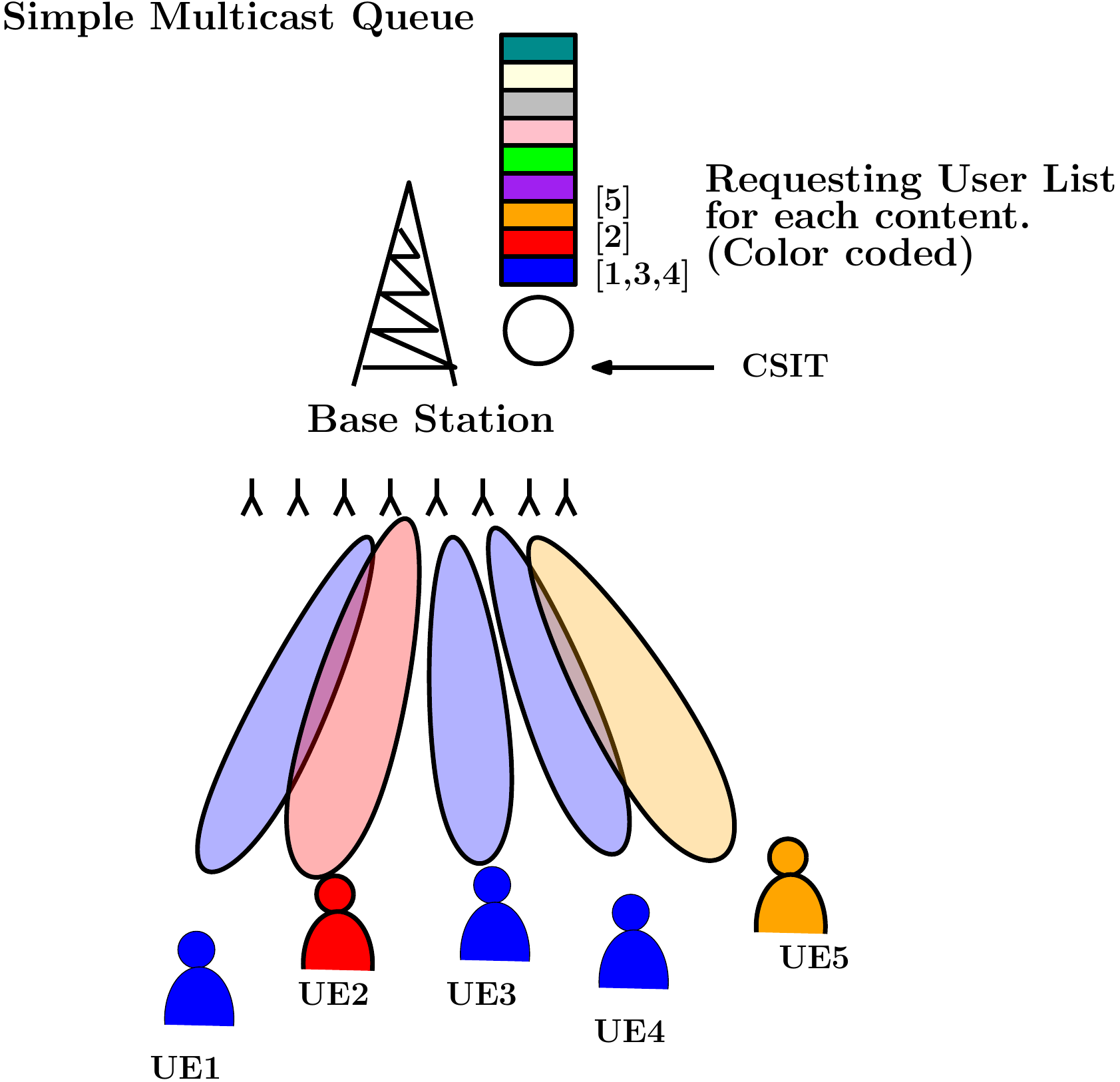}
\caption{Simple Multicast Queue system with Multi Antenna Base Station.} %The figure shows a typical scenario where multiple groups of users request different files (files are show in different colors). The UEs requesting a particular content are listed in the corresponding location of the content in the queue. The BS may choose to transmit one or more files (three in this example) to the requesting groups.}
\label{fig:MISO}
\end{figure}

\CK\textbf{Simple Multicast Queue (SMQ):}   In this queue, the files are served in a First-Come-First-Serve fashion with a slight modification: \textit{\CK when a file is transmitted, it serves all the users who have requests for it in the queue}. At a given instance the $j^{th}$ entry of the queue is denoted by $(n,\mathbb{L}_n)_j$, where $n\in[N]$ is the file index, $\mathbb{L}_n\neq\{\phi\}$ is the list of users requesting file $n$. When a new request for a file $n$ from user $k$ arrives at SMQ, it is merged at $j^{th}$ location as $\mathbb{L}_n=\mathbb{L}_n\bigcup k$. If none of the entries of SMQ has the file $n$ then a new entry $(n,L_n=\{k\})$ is added to the tail of the queue. %A new request for a file, from a UE is merged with the previous requests of the same file in the queue. This way multiple requests from multiple users for the same file are merged as one entry in the queue. %The queue serves the merged requests in the head of the line at each transmission, simultaneously. 
 %Using regenerative arguments it is also shown that the system has a unique stationary distribution. Using M/G/1 approximations, an approximate mean delay formula is also provided in \cite{TWC2021}. In this paper we further analyse the performance of the SMQ in multi-antenna case. 
The BS simultaneously transmits first $\min(s,S)$ files starting from the head of the line of SMQ using one of the schemes in Section \ref{sec:beamf}, where $s$ is the queue length and $S$ is a configurable system parameter. New requests for a file under service are added to the tail of SMQ and the subsequent requests are merged as before. Thus the queue length does not exceed $N$ at any given time. Further, the \textbf{\textit{sojourn time}}, $D$ of a request arriving at the BS at time $t_r$ and serviced at time, $t_c$ is given by, $D=t_c-t_r$. Here $D$ includes the service time $T^*$ (defined for each scheme in Section \ref{sec:beamf}). %In Section \ref{sec:station} we prove the existence of a unique stationary distribution of $D$ for queues considered in this paper. 
The mean sojourn time under stationarity (see Section \ref{sec:station}) is denoted as $E[D]$\CE. 

In the following sections we describe the beamforming strategies considered in this paper\CE. We assume complete channel state information at the transmitter (CSIT). In each channel use, the channel matrix $\textbf{H}\in \mathbb{C}^{L\times K}$ is drawn independently as $\textbf{H} \sim \mathcal{CN}(0,\Sigma )$, the complex Gaussian distribution with mean $0$ and covariance $\Sigma$. We consider two cases, namely, \textit{homogeneous} channels and \textit{heterogeneous} channels. In homogenous channel case, $\Sigma=g \textbf{I}$, where $\textbf{I}$ is the identity matrix of size $KL\times KL$, and $g$ is the mean of channel gain of each channel when all the users have similar channel statistics. In a heterogenous network different users channel statistics may be different. Here, $\Sigma=diag(\textbf{g}_{1}^T,\cdots,\textbf{g}_{K}^T)$, where, $\textbf{g}_{k}=g_k \textbf{1}$, $g_k$ is the mean fading (or) channel gain of user $k$ and $\textbf{1}$ is the vector of all ones of size $L$\CE. Further the transmitter may choose to transmit one or more files from the queue depending on the beamforming strategy. $S$ files in the head of the line of SMQ are denoted by $\mathcal{X}_S\triangleq\{X_1,X_2,...,X_S\}$. We assume that the transmitter uses Gaussian codebook to assign a codeword $\tilde{X}_s$ for each file $X_s$ and the codebook is known to the receiver as well \cite{KhalajMISO}. %Each file $X_s$ is mapped to codeword $\tilde{X}_s$, $s\in [S]$ using a unit power complex Gaussian codebook \cite{KhalajMISO}. 
Since SMQ merges different requests across users for the same content, it is possible that a user might have requested more than one content in $\mathcal{X}_S$.
\section{Beamforming Schemes}
\label{sec:beamf}
We now describe the \CK SDMA, OMA, NOMA and RS based beamforming schemes used in this paper. For $S=1$, schemes in Sections \ref{sec:MMF} and \ref{sec:MMF_SIC} are same and correspond to OMA. For $S\geq 2$, the scheme in Section \ref{sec:MMF}, corresponds to SDMA when there are no common users and otherwise, a combination of NOMA (for common) and SDMA (for other users). For $S\geq 2$, Section \ref{sec:MMF_SIC} gives a NOMA based scheme. Finally, Section \ref{sec:MMF_RS} presents our RS based scheme. These beamforming schemes are designed for queueing, time varying sets of active users and also cater to common users across multiple groups\CE.
%\subsection{Max-Min Fair (MMF) Multigroup Multicast Beamforming:}
\subsection{Max-Min Fair (MMF) Beamforming:}

\label{sec:MMF}
\CK Let the subset $\mathcal{X}\subset\{X_1,\cdots,X_S\}$, be the set of files requested by user $k$. In this strategy the transmitter chooses a precoding vector $\textbf{w}_s\in \mathbb{C}^{L\times1}$ for each file $s\in[S]$ from $S$ head of the line files. Thus received signal $y_k$ at user $k$ is given as
\begin{equation}
y_k=\sum_{s:X_s\in\mathcal{X}} {\textbf{h}_k^H}\textbf{w}_s\tilde{X_s}+\sum_{t:X_t\in\mathcal{X}_S\setminus\mathcal{X}} {\textbf{h}_k^H}\textbf{w}_t\tilde{X_t}+n_k,
\end{equation}
where  $k\in\mathcal{U},\ n_k\sim \mathcal{CN}(0,\sigma_k^2)$, $\textbf{w}_s\in \mathcal{C}^{L\times1}$, is the spatial precoding vector selected by the transmitter for file/stream $s$ and $\textbf{h}_k\in \mathcal{C}^{L\times1}$ is the $k^{th}$ column of matrix $\textbf{H}$.  The SINR of transmitted file $s \in \mathcal{X}$, requested by the $k^{th}$ user is: 

\begin{equation}
\gamma_k^s=\frac{{\vert\textbf{h}_k^H}\textbf{w}_s\vert^2}{\sigma_k^2+\sum_{t\in\mathcal{X}_S\setminus\mathcal{X}}{\vert\textbf{h}_k^H}\textbf{w}_t\vert^2}.
\end{equation}

Let $\mathcal{U}_s$ be the set of users requesting file $s$ and let $\mathcal{U}_A$ be the set of users requesting subset $A\subset[S]$. Further let $\mathcal{A}$ be the collection of all the subsets of $A\subset[S]$ with cardinality greater than one. That is, if $\tilde{B}\in\mathcal{A}$, then $\vert \tilde{B} \vert >1$.  %let $\mathcal{U}^n_I$ be the subset of users requesting subset $I\in\mathcal{I}^n$ of $\mathcal{X}_S$, consisting of $n$ files. Also, $\mathcal{I}^n$ is the collection of all subsets of $\mathcal{X}_S$ of size $n$. 

The precoding weights $\textbf{w}_s, s\in[S]$ are obtained by solving the following optimization problem $P1$:
\begin{equation}
\label{eq:opt}
\centering
\begin{split}
\max_{R_k^s, \textbf{w}_s, s\in[S]}\ \min_{k\in\mathcal{U}_s, s\in[S]} R^s_k\ \ \ \ \ \ \ \ \ \ \ & \ \ \\
%\text{such that}\ \ \ \ \ \ \ \  \ \ \ \ \ \ \ \ \  \ \ \ \  \ \ \ \ \ \ \ \ \ \ \ \ \ \ \ \ \ \ \ \ \ \ \ \ \ \ \ \ \ &\ \ \ \\
\text{s.t. } R_k^s\leq\log_2\left( 1+ \gamma_k^s\right), \forall k\in\mathcal{U}_s,\ s\in[S],\ \ \\
%\forall k\in\mathcal{U}_s,\ s\in[S],\ \ \ \ \ \ \ &\\
%\sum_{j\in\mathcal{M}_m}R_u^j\leq&\log_2(1+\sum_{j\in\mathcal{M}_m}\gamma_u^j),\\
\sum_{j\in \tilde{B}}R_u^j\leq\log_2(1+\sum_{j\in \tilde{B}}\gamma_u^j),\ \ \ \ \ \ \ \\
\text{for all $\tilde{B}\in\mathcal{A}$, $u\in\mathcal{U}_A$, } A\subset [S],\text{ and}\ \ \ \ \ \ \ \ \ \ \ \ \ \ \ \ \   \\
%\forall u\in\mathcal{U}_I^n,\ I\in\mathcal{I}^n,\ n\in&[S]\setminus\{1\},\\
%\forall\mathcal{M}_m:\mathcal{M}_m\subset I,\ \vert&\mathcal{M}_m\vert=m, m=[n]\setminus\{1\},\\
\sum_{s\in[S]}\norm{\textbf{w}_s}_2^2 \leq P.\ \ \ \ \ \ \ \ \ \ \ \  \  \ \ 
\end{split}
\end{equation}
The first set of inequalities are the rate constraints based on the Gaussian capacity for a single stream, considering the unwanted streams as noise. The users who want more than one file (common users), decode the required files using Successive Interference Cancellation (SIC). The second set of inequalities are for SIC MAC constraints for the common users or users who have more than one file requests in the queue. Finally, the last inequality is the total power constraint with power $P$. %We denote the ratio of transmit power to noise power as $SNR = P/N_0$.

Since $S$ files are being transmitted simultaneously, the total transmit time, $T^*=F/(R^*B)$, where $R^*$ is the optimal rate (in bits/sec/Hz) obtained by solving $P1$\CE.   
%This MMF problem is known to be NP-Hard \cite{KhalajMISO}. However good sub-optimal %points can be obtained using methods like Successive Convex Approximation (SCA), and reformulations as Second Order Cone Problem (SOCP) \
%algorithms such as SCA, SOCP are available \cite{KhalajMISO}. %Since the objective of this paper is to bring out the optimal beamforming strategy for queueing, we do not get into details of these reformulations. 
%We use Python's, SciPy SLSQP, which gives better performance than SOCP in our simulations.%, directly to solve our optimization problems. Infact, we have seen in our simulations that SLSQP gives better performance than SOCP \cite{KhalajMISO}, for $S=1$.%Infact, we have seen that SLSQP offers better optimal points compared to SOCP, \cite{KhalajMISO}, implemented using CVXPY \cite{cvxpy}, for $S=1$. \CE 
%%See extended version of this paper \cite{report} for symmetric rate reformulations of problems $P1-P3$\CE. 

%%Modified by Panju
\subsection{Max-Min Fair Beamforming with Full SIC (MMF-SIC)} 
\label{sec:MMF_SIC}
\CE In this strategy $S$ head of the line requests, $\mathcal{X}_S$  are transmitted to all the users  $\overline{\mathcal{U}}=\underset{s\in[S]}{\bigcup}\mathcal{U}_s$. Thus SINR for stream, $s$ at user, $k$ is $\overline{\gamma}^s_k={{\vert\textbf{h}_k^H}\textbf{w}_s\vert^2}/{\sigma_k^2}$.
All the messages are decoded at all the users using SIC. Let $R^s_k$ be the rate alloted to stream $s$ for user $k$, $\forall$ $s\in[S]$ and $k\in\overline{\mathcal{U}}$. Since all the users receive all the files, we have the flexibility to consider $S$ files in $\mathcal{X}_S$ as a single file of size $SF$ and rearrange the sizes of files to be transmitted in different streams as $\mathcal{X}^{\beta}_S=\{X^{\beta_1}_1,\cdots,X^{\beta_S}_S\}$, where the file $X^{\beta_s}_s$ is of size $\beta_s S F$, $\beta_s\in[0,1]$, $s\in[S]$ and $\sum_{s\in[S]}\beta_s=1$. Thus transmit time of stream $s$ to user $k$ is given by $T^s_k=\frac{\beta_s SF}{B R^{s}_k}$. Therefore, to minimize the transmit time of $S$ files, we have the following optimization problem $P_2$:
\begin{equation}
\label{eq:opt_SIC}
\centering
\begin{split}
%\min_{R^{s}_k,\beta_s, \textbf{w}_s, \forall s\in[S]\}} \max_{s \in [S]} \frac{\beta_s SF}{R^{(s)}}\\
\min_{R^{s}_k,\beta_s, \textbf{w}_s, s\in[S]} \ \max_{k\in\overline{\mathcal{U}}, s \in [S]}{T^s_k}\\
\text{s.t. } 
%\sum_{s \in [S]} \beta_s = 1,\\
\sum_{s \in \overline{\mathcal{S}}} R^s_k  \leq \log_2( 1+ \sum_{s \in \overline{\mathcal{S}}} {\overline{\gamma}^s_k}),& \forall  \overline{\mathcal{S}}\subset [S],  k \in \overline{\mathcal{U}},\\
%\forall  \overline{\mathcal{S}}\subset [S],  k \in \overline{\mathcal{U}}, \ &\\
\sum_{s \in [S]} \beta_s = 1,\ \sum_{s\in[S]}\norm{\textbf{w}_s}_2^2 \leq& P.
\end{split}
\end{equation}
%In the above setup, there are $S$ streams, with rate $R^s$ for $s \in [S]$. 
The channel from the BS to a given user $k$ forms a MAC channel with $S$ messages. The first set of constraints ensure $R^s_k$ for $s \in [S]$ and $k \in \overline{\mathcal{U}}$, lie in the achievable region for every user. This ensures that every user can decode all the $S$ streams using SIC\CE. The second equality constraint on file size fraction ensures that all the $SF$ bits are split to $S$ streams. The last inequality is the total power constraint. %In this strategy, the $S$ head of the messages, $\mathcal{X}_S$ are treated as one unit of total size $SF$ bits and transmitted to all the users $\underset{s\in[S]}{\bigcup}\mathcal{U}_s$. We form $S$ streams of size $\beta_s SF$ bits for $s \in [S]$ and $\sum_{s \in [S]} \beta_s = 1$. Each user decodes all the $S$ streams using SIC. Let $\mathbf{S} \subset [S]$. The time taken for each stream is $\beta_s SF/R^S$ where $R^s$ is the rate assigned for stream $s$ for $s \in [S]$. \\
%We have the following optimization problem $P_2$:
%\begin{equation}
%\label{eq:opt_SIC}
%\centering
%\begin{split}
%\max_{R^s, \textbf{w}_s, \forall s\in[S]} \sum_{s \in [S]}  & R^s \\
%\text{s.t. } 
%\forall  \mathbf{S}\subset [S]  \text{ and } k \in \underset{s\in[S]}{\bigcup}\mathcal{U}_s \ &\\
%\sum_{s \in \mathbf{S}} R^s & \leq \log_2\left( 1+ \sum_{s \in \mathbf{S}} {\textbf{h}_k^H}\textbf{w}_s\right) \\
%\end{split}
%\end{equation}
%In the above, there are $S$ streams with rate $R^s$ for $s \in [S]$. 
%The channel from BS to a given user $k$ forms a MAC channel with $S$ messages. 
%The constraints ensure $R^s$ for $s \in [S]$ lie in acheivable region of every user. 
%In other words, the rate of $S$ streans lie in intersection of acheivable region of all users. 
%Thus, every user can decode all the $S$ streams. 
%The individual rate within this intersection of acheivable regions is chosen in such a way all $S$ require the same amount of time $T$ to transmit. 
%Furhtermore, we minimize the time for transmission $T$. 
%This is done as follows. 
%We treat $S$ files as one unit of size $SF$ bits. 
%If we split $SF$ bits into $S$ parts such that the parts have size $R^s SF/(\sum_{s'} R^{s'})$, the transmission time $T = SF/(\sum_{s'} R^{s'})$ is the same for each part. 
%To minimize $T$, we maximize $\sum_{s'} R^{s'}$ as indicated in the objective function\CE.
%\vspace{-10pt}
\subsection{Max-Min Fair Rate Splitting (MMF-RS) Beamforming}
\label{sec:MMF_RS}
In this section we give a new formulation of Rate Splitting (RS) proposed in \cite{RS2017}. %Before we present our formulation we briefly discuss the Rate Splitting proposed in \cite{RS2017} and bring out the need for new formulation. Classically RS was proposed for SISO system \cite{RSOrigin}. 
In RS \cite{RS2017}, the idea is to split a particular file into two parts, map the parts to two symbols and transmit both the parts simultaneously with different precoding weights. At the receiver the first symbol is decoded considering the other part as interference and then the decoded symbol is cancelled from the received signal and the second signal is recovered. The optimal weights are obtained by optimizing the sum rates of both the streams. While this is a good objective for physical layer, the queueing layer has to wait for transmission of both the files before the next can be served\CK. Thus instead of maximizing the sum rate, we need to minimize the maximum transmit time for both the parts. For our MISO case, the problem is formulated as follows\CE:

As in section \ref{sec:MMF}, we consider $S$ files at the head of the line of the queue. Each file $X_s\in\mathcal{X}_S$ of size $F$ bits is split into two parts $X^0_s$ and $X^1_s$ with corresponding sizes $\alpha_s F$ and $(1-\alpha_s) F$ correspondingly, where $\alpha_s\in[0,1]$ is an optimization parameter for file $s$. The transmitter chooses weights $\textbf{w}_s, s\in[S]$ for transmitting mapped symbols $X^1_s\rightarrow\tilde{X}^1_s, s\in[S]$. The parts $\{X^0_1,\cdots,X^0_S\}$ are mapped to a single symbol as $\{X^0_1,\cdots,X^0_S\}\rightarrow\tilde{X}^0_D$. The subscript $D$ represents degraded transmission as in \cite{RS2017}. %It is called degraded transmission because all the file subparts are decoded by all the users, hence this incurs a Degrees of Freedom (DoF) loss \cite{RS2017}. 
The SINR for the degraded stream at user $k$ is given as:
\begin{equation}
\gamma^D_k=\frac{{\vert\textbf{h}_k^H}\textbf{w}_D\vert^2}{\sigma_k^2+\sum_{s\in[S]}{\vert\textbf{h}_k^H}\textbf{w}_s\vert^2}.
\end{equation}
Let $R_s$ be the rate allocated to stream $s$ and $R_D$ be the rate allocated to stream $D$. The transmitter chooses precoder $\textbf{w}_D$ for transmitting $\tilde{X}^0_D$. The transmission times of symbols  $\tilde{X}^1_s, s\in[S]$ and $\tilde{X}^0_D$ are $(1-\alpha_s)F/(R^s B), s\in[S]$ and $\sum_{s\in[S]}\alpha_sF/(R^D B)$ respectively. Now we want to minimize the maximum transmit time. This leads to the following optimization problem $P3$:  
\begin{equation}
\label{eq:opt_RS}
\centering
\begin{split}
\min_{\substack{R^D,\textbf{w}_D,R^s_k, \textbf{w}_s\\ 0\leq\alpha_s\leq1, s\in[S]}}\ \max_{k\in\mathcal{U}_s, s\in[S]} \{\frac{\sum_{s\in[S]}\alpha_sF}{BR^D },\frac{(1-\alpha_s)F}{BR^s_k}\} \\
%\text{where the min is over } {R^D,\textbf{w}_D,R^s_k, \textbf{w}_s  0\leq\alpha_s\leq1, s\in[S]}\\
%\text{such that }\ \ \ \ \ \ \ \ \ \ \ \ \ \ \ \ \ \ \ \ \ \ \ \ \ \ \ \ \ \ \ \ \ \ \ \ \ \ \ \ \ \ \ \ \ \ \ \ \ \ \ \ \\
\text{s.t.\ } R^D\leq\log_2\left( 1+ \gamma_k^D\right), \forall k\in\mathcal{U}_s,\ s\in[S],\ \ \ \ \ \\
%\forall k\in\mathcal{U}_s,\ s\in[S],\ \ \ \ \ \ \ \ \ \ \ \ \ \ \ \ \ \ \ &\\
%R_k^s\leq&\log_2\left( 1+ \gamma_k^s\right)\\
%0\leq\alpha_s\leq1, \forall s\in[S]&\\
%\forall u\in\mathcal{U}_I^n,\ I\in\mathcal{I}^n,\ n\in&[S]\setminus\{1\}\\
%\sum_{j\in\mathcal{M}_m}R_u^j\leq&\frac{1}{m}\log_2(1+\sum_{j\in\mathcal{M}_m}\gamma_u^j)\\
%\forall\mathcal{M}_m:\mathcal{M}_m\subset I,\ \vert&\mathcal{M}_m\vert=m, m=[n]\setminus\{1\}\\
\norm{\textbf{w}_D}_2^2+\sum_{s\in[S]}\norm{\textbf{w}_s}_2^2 \leq P,\ \ \ \ \ \ \ \ \ \ \ \ \ \ \ \\
\text{and rest of the constraints, as in (\ref{eq:opt})}. \ \ \ \ \ \ \  \ \ \ \ \ \ \ \ \ \ \ \ \ \
\end{split}
\end{equation}
The constraint on $R_D$ ensures delivery of the degraded stream to all the users. Finally the last inequality is the total power constraint\CK. In both, $P2$ and $P3$, the service time $T^*$ is the optimal value of the respective objective function\CE.  % at $t^{th}$ service instant. The service time for transmitting all the $S$ files, $s_t=T^{*}_t$. 

\subsection{Optimization Reformulation and Service Time:}
\label{sec:reformulate}
\CK For a tractable queueing system, the transmitter serving multiple groups simultaneously should start the next transmission after all the transmissions are complete. Thus imposing symmetric rate across all groups (i.e., transmitting all groups with the optimal min rate), does not change the optimal transmit time obtained by solving $P1-P3$. Thus by imposing symmetric rate requirement our problem $P1$ can be reformulated as:
\begin{equation}
\label{eq:opt_re}
\centering
\begin{split}
\max_{r,\textbf{w}_s, s\in[S]}r\ \ \ \ \ \ \ \ \ \ \ \ \ \ \ \ \ \ \ \  \ \ \ \ \  \ \  \ \ \\%\min_{k\in\mathcal{U}_s, s\in[S]} &r \\
\text{s.t. }
r\leq\log_2\left( 1+ \gamma_k^s\right),\ \forall k\in\mathcal{U}_s,\ s\in[S],\ \ \ \ \ \ \ \ \ \ \ \ \ \ \ \ \ \ \ \ \ \ \ \  \\
r\leq\frac{1}{\vert\tilde{B}\vert}\log_2(1+\sum_{j\in\tilde{B}}\gamma_u^j),\ \ \ \ \ \ \ \ \ \ \ \ \ \ \ \  \\
\text{for all $\tilde{B}\in\mathcal{A}$, $u\in\mathcal{U}_A$, } A\subset [S],\text{ and}\ \ \ \ \ \ \ \ \ \ \ \ \ \ \ \  \ \ \ \ \ \ \ \ \ \ \ \  \\
%\forall u\in\mathcal{U}_I^n,\ I\in\mathcal{I}^n,\ n\in&[S]\setminus\{1\},\\
%\forall\mathcal{M}_m:\mathcal{M}_m\subset I,\ \vert&\mathcal{M}_m\vert=m, m=[n]\setminus\{1\},\\
\sum_{s\in[S]}\norm{\textbf{w}_s}_2^2 \leq P.\ \ \ \ \ \ \ \ \ \ \ \ \ \ \ \ \ \ \ \ \ \ \ \  \ \ \ 
\end{split}
\end{equation}
The optimization $P2$ and $P3$ however require a slightly different reformulation when we impose a symmetric rate constraint. 
In $P_2$, we impose that the time taken to transmit each stream to be the same. Towards this we consider $R^{s}=\min_{k\in\overline{\mathcal{U}}} R^s_k$ for all $s \in [S]$. Further, for achieving same transmit time, we set the fraction $\beta_s = R^{s}/(\sum_{t\in[S]} R^{t})$, $\forall s \in [S]$. The symmetric transmission time is thus, $T = SF/(B\sum_{s} R^{s})$ for each stream. We define the symmetric rate of transmission as $r=1/T$. Thus $P2$ can be reformulated as maximization of $\sum_{s} R^{s}$. This leads us to the following simplified formulation of $P2$:

\begin{equation}
\label{eq:opt_SIC_re}
\centering
\begin{split}
\max_{R^{s}, \textbf{w}_s, \forall s\in[S]} \sum_{s \in [S]} R^s \ \ \ \ \ \ \ \ \ \ \ \ \ \ \ \ \\
\text{s.t. } 
\forall  \overline{\mathcal{S}}\subset [S]  \text{ and } k \in \overline{\mathcal{U}},\ \ \ \ \ \ \ \ \ \ \ \ \ \ \ \ \\
\sum_{s \in \overline{\mathcal{S}}} R^s  \leq \log_2\left( 1+ \sum_{s \in \overline{\mathcal{S}}} {\textbf{h}_k^H}\textbf{w}_s\right), \ \ \ \ \ \ \ \ \ \ \\
\sum_{s\in[S]}\norm{\textbf{w}_s}_2^2 \leq P.\ \ \ \ \ \ \ \ \ \ \ \ \ \ \ \ \ \ \ \ 
\end{split}
\end{equation}

Similarly, in $P3$ we impose $r=\frac{BR^D}{\sum_{s\in[S]}\alpha_sF }=\frac{BR^s_k}{(1-\alpha_s)F}$ for all feasible $s,k$'s. This leads to the following reformulation of the optimization problem $P3$:
\begin{equation}
\label{eq:opt_RS_re}
\centering
\begin{split}
\max_{r, \textbf{w}_D,\textbf{w}_s, s\in[S]}r\ \ \ \ \ \ \ \ \ \ \ \ \ \ \ \ \ \ \ \ \ \ \ \ \ \ \ \ \ \ \\%\min_{k\in\mathcal{U}_s, s\in[S]} &r \\
\text{s.t.}\ \ \forall k\in\mathcal{U}_s,\ s\in[S],\ 0\leq\alpha_s\leq1,\ \ \ \ \ \ \ \ \ \ \ \ \ \ \ \ \ \ \ \ \ \\
r\sum_{s\in[S]}\alpha_sF\leq B\log_2\left( 1+ \gamma_k^D\right),\ \ \ \ \ \ \ \ \ \ \\
%\forall u\in\mathcal{U}_I^n,\ I\in\mathcal{I}^n,\ n\in&[S]\setminus\{1\}\\
%\sum_{j\in\mathcal{M}_m}R_u^j\leq&\frac{1}{m}\log_2(1+\sum_{j\in\mathcal{M}_m}\gamma_u^j)\\
%\forall\mathcal{M}_m:\mathcal{M}_m\subset I,\ \vert&\mathcal{M}_m\vert=m, m=[n]\setminus\{1\}\\
r(1-\alpha_s)F\leq B\log_2\left( 1+ \gamma_k^s\right),\ \ \ \ \ \ \ \ \ \ \\
%\forall u\in\mathcal{U}_I^n,\ I\in\mathcal{I}^n,\ n\in&[S]\setminus\{1\}\\
r\sum_{j\in\tilde{B}}{(1-\alpha_j)}F\leq B\log_2(1+\sum_{j\in\tilde{B}}\gamma_u^j),\ \ \ \ \ \ \ \ \ \ \\
\text{for all $\tilde{B}\in\mathcal{A}$, $u\in\mathcal{U}_A$, } A\subset [S],\text{ and}\ \ \ \ \ \ \ \ \ \ \ \ \ \ \ \ \ \ \ \ \\
%\forall\mathcal{M}_m:\mathcal{M}_m\subset I,\ \vert&\mathcal{M}_m\vert=m, m=[n]\setminus\{1\}\\
\norm{\textbf{w}_D}_2^2+\sum_{s\in[S]}\norm{\textbf{w}_s}_2^2 \leq P.\ \ \ \ \ \ \ \ \ \ \ \ \ \ \ \ \ 
\end{split}
\end{equation}

\textit{Service Time:} For the above reformulations, the service time $T^*$, is given by $T^*=\frac{F}{B r^*}$ for MMF (\ref{eq:opt_re}), MMF-SIC (\ref{eq:opt_SIC_re}) and $T^*=\frac{1}{r^*}$ for MMF-RS (\ref{eq:opt_RS_re}) Beamforming, where $r^*$s are the optimal values of the objectives in (\ref{eq:opt_re}), (\ref{eq:opt_SIC_re}) and (\ref{eq:opt_RS_re}).

\textit{Solver:} The MMF problems (like $P1$-$P3$) are known to be NP-Hard \cite{KhalajMISO}. However good sub-optimal %points can be obtained using methods like Successive Convex Approximation (SCA), and reformulations as Second Order Cone Problem (SOCP) \
algorithms such as SCA, SOCP are available \cite{KhalajMISO}. %Since the objective of this paper is to bring out the optimal beamforming strategy for queueing, we do not get into details of these reformulations. 
We use the reformulations (\ref{eq:opt_re}), (\ref{eq:opt_SIC_re}) and (\ref{eq:opt_RS_re}) along with Python's, SciPy SLSQP, to directly solve for the beamformers.  We have seen that this gives better performance than SOCP in our simulations\CE.%, directly to solve our optimization problems. Infact, we have seen in our simulations that SLSQP gives better performance than SOCP \cite{KhalajMISO}, for $S=1$.%Infact, we have seen that SLSQP offers better optimal points compared to SOCP, \cite{KhalajMISO}, implemented using CVXPY \cite{cvxpy}, for $S=1$. \CE 

\section{Dual Simple Multicast Queue (DSMQ)}
\label{sec:DSMQ}
Performance of SMQ in MISO with MMF beamforming schemes is severely affected by the presence of bad channel users. This is because (\ref{eq:opt}), (\ref{eq:opt_SIC}) and (\ref{eq:opt_RS}) maximise the min rate, which is controlled by users with bad channel statistics. Under stationarity, by PASTA \cite{Asmussen}, the users with good channel also experience the same mean sojourn times as bad users. Hence the performance of the overall system degrades\CE. %We also observe in \cite{report} that schemes such as loopback, defer etc., \cite{TWC2021}, and power control in time \cite{Arxiv2019}, which was very useful in SISO case is not helpful in our MISO setup with heterogenous channel users\CE. %We also claim that it is not possible to decrease the mean sojourn times of good users while using a single queue. To prove this we give the following theorem:
%\begin{theorem}
%Under IRM traffic, for SMQ, only max-min Fair schemes are delay optimal.
%\end{theorem}
%\begin{proof}
%Let the rate pair $\{R_G,R_B\}$ be the rates of good user and bad user, obtained from an MMF scheme under power constraint $P$ and power allocation $\textbf{w}$. Clearly $R_G\geq R_B$. Hence due to symmetric rate constraints and PASTA, the mean sojourn time $D_{MMF}$ of SMQ is dependent on  $R_B$. Next, we observe that max-min Fair schemes are Pareto efficient \cite{HAL}. Now due to Pareto efficiency, any modification $\textbf{w}'$ to the power allocation to improve rates of good users such that the new rate for good user $R'_G>R_G$, it is necessary that $R'_B<R_B$. Thus, again by symmetric rate constraint and PASTA, for the new mean sojourn time $D'$ obtained by $w'$, we have $D'>D_{MMF}$. 
%\end{proof}
To improve the performance of good channel users while maintaining fairness to bad channel users \CE in multi-antenna case\CE, we modify the SMQ scheme. 

We call the new kind of queue as Dual Simple Multicast Queue (DSMQ). In DSMQ, the requests from the good and the bad channel users are put in two different simple multicast queues, SMQ-G and SMQ-B, respectively. We assume that the BS keeps track of the statistics \CK(e.g., mean/moving average of channel gains\CE) of each user and thus can differentiate between good and bad channel users\CK. When a user channel changes (e.g., due to movement) the BS appropriately changes the queue (SMQ-G/B) for the user (such variations are not in the scope of this work)\CE. %In a practical scenarios these these groups may be time varying (due to user movement, channel non-stationarity etc.). We assume that such variations are not drastic and the BS regroups the users when such changes occur. Study for such variations is not in the scope of this work\CE. 

Further, only one queue is serviced at a time, using all the antennas. Depending on the setting the BS solves $P1$, $P2$ or $P3$ and serves users in first $S$ head of the line files of the queue. We fix a number $C$ and we allow the SMQ-B to be serviced once in every $C$ channel uses or when SMQ-G is empty. If both SMQ-G and SMQ-B are empty the first arrival to the system is served. In all the other conditions, only SMQ-G is serviced.  This way we decouple the QoS (delay) of good channel users from that of bad channel users. 

\CK We will see in Section \ref{sec:sim} that %DSMQ with the 
an appropriate choice of $C$ and beamforming strategy improves the QoS of good channel users without drastically affecting the bad channel users\CE. 

\CE DSMQ can be generalised to the case when there are $G (\geq2)$ groups of users with sufficiently different channel statistics (\CK e.g., 15dB difference in $g_k$s\CE). Here, the number of queues can be increased to, $G$. However, increasing the number of queues reduces the multicast opportunities and perhaps the performance. Thus choosing the number of queues is a tradeoff between fairness and the system performance\CE. 

\CR
\textit{Remarks:} We record here, that the DSMQ, is a culmination of evaluation of multiple schemes that could serve as a potential candidate for providing differentiated QoS (delay). One possible candidate (named ``2Q Simultaneous'', see Section \ref{sec:sim}) is the one that maintains two different queues as DSMQ and serves all the users in head-of-the-line, in both the queues simultaneously, via two streams using one of the MMF beamforming schemes ($P1-P3$). This gives poor performance mainly because the service times become coupled for both the queues, thus bringing down the performance of the good user queue and the overall system. 

We note that  in literature \cite{huang2012stability,queuestable,Qaware,delayperform,largedevqueue}, queueing systems for MIMO/MISO use individual queues for individual users and the transmission/service is performed simultaneously for a subset of users in the head-of-the-line of these queues. These are generalizations of the ``2Q Simultaneous'' queueing system, albeit without merging of redundant requests. We show via simulations, that splitting queues this way (particularly for homogenous system), is detrimental to the performance, since the splitting of queues reduces the multicast opportunities offered by redundant requests for the same files in CCNs.

Further, we also point out that DSMQ is useful only in heterogeneous case. For homogenous case, where we are interested in overall mean sojourn time, we show in Section \ref{sec:sim}, that DSMQ gives poor performance.\\

\CE

%allotted to different users depending on their channel conditions. 

%\subsection{DSMQ with Power Control}
%We can for the improve the performance of the good and bad channel users be performing power control in time. We choose transmit powers $P_G$ and $P_B$, during service of SMQ-G and SMQ-B respectively. We choose the powers such that 
%
%\begin{equation}
%\frac{P_G (C-1) E[S_G]+P_BE[S_B]}{(C-1) E[S_G]+E[S_B]}\leq P
%\end{equation} 
%
%Where $E[S_G]$ and $E[S_B]$, are mean service times of SMQ-G and SMQ-B while using $P_G$ and $P_B$ respectively. Thus $(C,P_G,P_B)$ have to be tuned simultaneously to achieve an optimal QoS. \CE Optimization using DSGD in progress\CE.
\section{\CE Stationarity of SMQ and DSMQ\CE}
\label{sec:station}
\CK Before we proceed it is important to establish the existence of stationarity of the proposed queueing systems. Towards this, we define the state of the queue at a given time $t$, as $X_t=\{(i_1,\mathbb{L}_{i_1}),\cdots,(i_q,\mathbb{L}_{i_q})\}$, where the tuple $(i_j,\mathbb{L}_{i_j})$ represents the $j^{th}$ queue entry of file $i_j$ and $\mathbb{L}_{i_j}$ is the list of users requesting file $i_j$ and $q$ is the queue length. Let, $E[T]$ be the mean service time when all the users request all the $S$ head of the line files (for any given beamformer setting, $P1,P2$ or $P3$).  We assume that $E[T] < \infty$. Let $D_j$ denote the sojourn time of $j^{th}$ request arrival to the queue (SMQ/DSMQ). For DSMQ let $\{X_t^G,X_t^B\}$ be the state of the queue, where $X_t^G$ and $X_t^B$ are defined for the good user queue and the bad user queue in a similar manner. We have the following proposition:

\begin{prop}
\label{prop:station}
Under IRM, if $E[T]<\infty$, then for SMQ and DSMQ, $\{D_j\}$ is an aperiodic regenerative process with finite mean regeneration interval and hence has a unique stationary distribution. Also, starting from any initial distribution, $\{D_j\}$ converges in total variation to the stationary distribution. 
\end{prop}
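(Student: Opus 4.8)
The plan is to exploit the feature of SMQ stressed in Section~\ref{sec:system_model}: merging of identical requests keeps the queue at no more than $N$ distinct entries, so the configuration $X_t$ (and, for DSMQ, the pair $(X_t^G,X_t^B)$ together with the round-robin counter that schedules SMQ-B) lives in a \emph{finite} set $\mathcal{S}$. First I would pass to the embedded process $(Y_n)_{n\ge0}$ recording this configuration at the successive service-completion epochs and at those arrival epochs at which a request hits an empty system. Because the request streams are independent Poisson processes (hence memoryless) and the per-transmission channel matrices $\mathbf{H}$ — and therefore the batch service times $T(\cdot)$ — are drawn i.i.d.\ and independently of the arrivals, the law of $(Y_{n+1},Y_{n+2},\dots)$ depends on the past only through $Y_n$; thus $(Y_n)$ is a time-homogeneous Markov chain on the finite set $\mathcal{S}$.

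Next I would single out the regeneration state $\mathbf{0}$ = ``system empty'' (for DSMQ: both sub-queues empty, with the counter pinned to a fixed value). From \emph{any} state, $(Y_n)$ reaches $\mathbf{0}$ in at most $N$ (resp.\ $2N$) steps with positive probability: the chance that no arrival falls in a given batch service is $\E[e^{-\lambda T}]>0$ — the exponent is a.s.\ finite since $\E[T]<\infty$ — and a run of such silent services empties the bounded-length queue. Hence $\mathbf{0}$ is recurrent and sits in the unique closed communicating class of the finite chain, which is therefore hit from every starting state; so $(Y_n)$ has a unique stationary law. It is aperiodic: from $\mathbf{0}$ one returns in two steps (an arrival, then a silent service clearing the single entry) and in three steps (an arrival of file $i$; a service of $i$ during which exactly one request for a different file $j$ arrives; then a silent service of $j$), and $\gcd(2,3)=1$. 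The DSMQ argument is identical, the counter merely enlarging $\mathcal{S}$.

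I would then transfer this to $\{D_j\}$. Take as regeneration epochs the arrival indices $j$ at which request $j$ finds the system empty; these are exactly the visits of $(Y_n)$ to $\mathbf{0}$, hence a.s.\ infinitely many and with the first one finite a.s.\ from every initial condition. Every request arriving during a busy period is served before that busy period ends, so the sojourn times of all arrivals in a given cycle are determined within that cycle; and since at a regeneration epoch the system is empty while the mark $(n,k)$ of the triggering arrival is i.i.d.\ across cycles (law $\lambda_{nk}/\lambda$) and independent of the past, the blocks $\bigl(D_j:\ j\ \text{in cycle}\ c\bigr)$ are i.i.d. Thus $\{D_j\}$ is a (delayed) regenerative sequence whose regeneration interval $M$ — the number of arrivals per cycle — is aperiodic, because $M=1$ (arrival to an empty system followed by a silent service) and $M=2$ both occur with positive probability.

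The remaining and main point is $\E[M]<\infty$, which is where $\E[T]<\infty$ is used. As a finite positive-recurrent chain, $(Y_n)$ has a cycle of finite expected length $L$ (in embedded steps). Each cycle contains one idle interval of mean $1/\lambda$, and at the $i$-th service step the conditional mean number of fresh arrivals is $\lambda\,\E[T(Y_i)]\le\lambda\bar T$ with $\bar T:=\max_{X\in\mathcal{S}}\E[T(X_{\mathrm{head}})]$, which is finite because there are finitely many configurations and each batch — served by no more files or users, hence under no more MMF constraints, than the all-users, all-$S$-files batch — has service time stochastically dominated by $T$, so $\bar T\le\E[T]<\infty$. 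Writing $M=1+\sum_{i=1}^{L}(\text{arrivals during step }i)$ and noting that $\{L\ge i\}$ is measurable with respect to the history before step $i$ while $Y_i$ is fixed by that history, a Wald-type estimate gives $\E[M]\le1+\lambda\bar T\,\E[L]<\infty$, and the same bound makes the mean cycle \emph{duration} finite. With an aperiodic, finite-mean regeneration interval, the classical limit theorem for regenerative processes (e.g.\ \cite{Asmussen}) yields a unique stationary distribution for $\{D_j\}$ and convergence to it in total variation from any initial distribution. I expect the verification of $\E[M]<\infty$ — in particular the domination of an arbitrary batch service time by $T$ and the interchange in the Wald step — to be the most delicate part; the rest is bookkeeping over a finite state space.
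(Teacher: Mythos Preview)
Your proof is correct and follows essentially the same route as the paper: embed the queue state at service completions to obtain a finite, irreducible, aperiodic Markov chain, take returns to the empty state as regeneration epochs for $\{D_j\}$, and bound the mean number of arrivals per cycle by (a Wald-type variant of) $\lambda\,E[\tau]\,E[T]+1$. The only cosmetic difference is that the paper embeds purely at departure epochs, so the empty state has a one-step self-loop and aperiodicity is immediate, whereas your inclusion of arrival-to-empty epochs forces the $\gcd(2,3)$ argument; your treatment of the finite-mean step is in fact more explicit than the paper's.
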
 
%\begin{proof}
%\CE The proof uses Markov chains and regenerative arguments \cite{Asmussen}, to show that the mean regeneration length of the queue state is finite. This is further used to show the existence and uniqueness of the stationary distribution of $\{D_j\}$. See, extended version \cite{report} for a proof\CE.
%\end{proof}
\begin{proof}
%%We will first prove it for SMQ. We first look at discrete time model of the queue. 
Let, $Y_n$, be the state of the queue just after $n^{th}$ departure. Since, there are only $N$, finite number of files in the library, by IRM assumption, $\{Y_n\}$ is a finite state, irreducible discrete time Markov chain (DTMC), \cite{Asmussen}. Now, we show that $\{Y_n\}$ is also aperiodic. To see this, consider the state $Y_n=\{\phi\}$, that is the queue is empty. We note that $P(Y_{n+1}=\{\phi\} \vert Y_n=\{\phi\})>0$, since starting from $Y_n=\{\phi\}$, the event that there is exactly one arrival, has positive probability. Thus the DTMC has a unique stationary distribution. 

Next, consider the delay $D_j$ of the $j^{th}$ arrival to the system just after $Y_n=\{\phi\}$. The epochs, $Y_n=\{\phi\}$ are also regeneration epochs for $\{D_j\}$. Let $E[\tau]$ be the mean regeneration length of $\{Y_n\}$. %%Further, assume that the mean service time/transmission time, $E[T]<\infty$ (see note below for cases when this is true). This bound holds for all arrival rates. 
The mean number of total request arrivals to the BS during these regeneration epochs, defined as $\overline{\eta}$ is bounded by $\lambda E[\tau] E[T]+1$ which is finite. Further $\overline{\eta}$ is also the mean regeneration length of the $\{D_j\}$ process. Therefore, $\{D_j\}$ also has finite mean regeneration length and is aperiodic by the argument given for $\{Y_n\}$. Thus $\{D_j\}$ has a unique stationary distribution. Also, starting from any initial distribution, $\{D_j\}$ converges in total variation to the stationary distribution. 
\end{proof} 

Similarly, we can show that $\{X_t\}$ also has a unique stationary distribution and that starting from any initial distribution, converges in total variation to the stationary distribution. 

We can also show stationarity for DSMQ, in a similar manner by considering the state of the two queue system $\{Y^G_n,Y^B_n\}$ just after the $n^{th}$ departure. %%and proving that  $Y^G_n,Y^B_n=\{\phi,\phi\}$ are regeneration epochs. 
%%This process is also a finite state DTMC which is aperiodic and irreducible. It's regeneration epochs are when $(Y^G_n,Y^B_n)=\{\phi,\phi\}$. The stationarity of the $\{X^G_t,X^B_t\}$ and its mean sojourn time can be proved in a manner similar to that of $\{X_t\}$.
We remark that the stationary distribution itself can be quite complicated, given that the dimensionality of $Y_n$ and $(Y^G_n,Y^B_n)$ will be very large. 

Further, since $E[T]<\infty$ and the queue length is bounded by $N$, the mean sojourn time is upper bounded by $(N+1)E[T]$. This is a unique feature of our queue. 

{The assumption that $E[T]<\infty$, can be satisfied with slight modification to the service of SMQ/DSMQ. %First we lower bound the optimal rate $r^*>r_\epsilon$ where $r_\epsilon$ is a positive quantity close to zero. 
Let, $r^*=1/T^*$, where $T^*$ is the optimal service time obtained after solving (\ref{eq:opt_re}), (\ref{eq:opt_SIC_re}) or (\ref{eq:opt_RS_re}). Further we, fix $r_\epsilon>0$, such that, if  $r^*< r_\epsilon$, we do not transmit for $1/r_\epsilon$ secs.  This time is typically greater than the coherence time after which the channel, $\textbf{H}$ changes. We then solve for the new $\textbf{H}$ and repeat the procedure, say $n$ times, till $r^*\geq r_\epsilon$. Here, $n$ has a geometric distribution with parameter $p$, where $p$ is the probability, $P(r^*\geq r_\epsilon)$.  We choose $r_\epsilon$ such that  $1-p$ is close to zero. Thus $E[T]\leq (1+p)/(p r_\epsilon)<\infty$.}\CE %\textit{The assumption $E[T]<\infty$, is valid in settings where the channels fading coefficients take values in a finite set with positive values, Truncated Rayleigh etc. which can be as close to the complex Gaussian model as required, depending on the discretisation or the truncation\CE. See more details in the extended version \cite{report}}

\CG
\section{Theoretical Approximations of Mean Sojourn Time: SMQ and DSMQ}
\label{sec:smq_theory}
In this section we derive approximate mean delays/sojourn times for SMQ and DSMQ\CG. In \cite{TWC2021} we derived approximate mean sojourn time for the Simple Multicast Queue for single antenna case where the service time distribution was i.i.d, its first and second order statistics were known and also when only head of the line was serviced $(S=1)$. However, in Multi-Antenna case the service time distribution is dependent on
\begin{enumerate}
\item the distribution of channel gains, of different users, 
\item the solutions of the optimization problems $P1-P3$ and 
\item the stationary user distribution in the queue\CG. 
\end{enumerate}
Among these the distribution of channel gains is known (from assumptions). A closed form solution to the optimization problems $P1-P3$ is not available as the problem is known to be NP Hard\CR. Further, the stationary user distribution itself is dependent on service time distribution\CG. Thus, although the service time distribution in principle can be obtained given the channel gain distribution and the user distribution, its closed form is not available. But it can be easily obtained in simulation. We denote the first and second moment of service time as $\overline{T_\Sigma(\pi)}$ and $\overline{T^2_\Sigma(\pi)}$ for a given channel gain distribution $\Sigma$ and user distribution $\pi$.  %But, the user distribution of the file requests in the queue itself is unknown and complicated as explained in section \ref{sec:station}. Thus i
In this section we derive approximations for the user distribution also, following that we propose a method for getting the mean sojourn times. 

As a first step we approximate the mean sojourn time of SMQ, when the services have a known general and independent distribution ($GI$ services) as in \cite{TWC2021} and extend it to $S\geq2$. This approximation will be further generalized to the multi-antenna case.

\subsection{Approximation of Mean Sojourn Time of SMQ for $GI$ services}
\label{sec:smq_mst}
In SMQ for $S=1$, we first typify the requests into two types, Type 1 (T1) and Type 2 (T2). Among the overall requests arriving at the Base Station queue, T1 requests correspond to the first request for a file that is either unavailable in the queue (or) is in service. Thus when a T1 request arrives at the queue it is put at the tail of the queue and the subsequent requests (T2) for the file are merged with this entry, as mentioned in section \ref{sec:system_model}. Let $\lambda_i'$ correspond to the rate of arrival of file $i$ in T1. Now, let $D_{1i}$ represent the delay experienced by the T1 arrival of file $i$. By PASTA (\cite{Asmussen}\CG), the mean number of total arrivals to the queue till this request is served is given as $1+\lambda_i E[D_{1i}]$, where $\lambda_i=\sum_{j=1}^K \lambda_{ij}$. Thus, 

\begin{equation}
\label{eq:eff_lam}
\lambda_i'=\frac{\lambda_i}{1+\lambda_i E[D_{1i}]}.
\end{equation}

Further, the rate of total T1 traffic for all the files is given as $\lambda'=\sum_{i=1}^N \lambda_i'$. We further assume $E[D_{1i}] = d$, for some $d$ and all $i$. In \cite{TWC2021}, we showed that this approximation is quite accurate and have verified that it holds even for varying file sizes. Finally we approximate the mean delay of T1 traffic using P-K formula for M/G/1 queues \cite{Asmussen} as 
\begin{equation}
\label{eq:mg1}
d= \frac{\rho_d}{1-\rho_d}  \left(\frac{E[T^2]}{2 E[T]}\right),
\end{equation}
where $\rho_d=E[T]\lambda' =E[T]\sum_{i=1}^N \frac{\lambda_i}{1+\lambda_i d}$ and $E[T]$ and $E[T^2]$ are first and second order means of i.i.d service times. The mean delay of T1 traffic can be calculated by recursively solving (\ref{eq:mg1}), starting from any $d$. Further, if $d^*$ is the T1 delay obtained from the recursion, the delay experienced by T2 traffic has mean $d^*/2$. This is because the overall traffic is Poisson and hence the T2 arrivals have uniform distribution in the duration $d^*$. Thus the overall mean sojourn time is given as 
\begin{equation}
\label{eq:mst}
\overline{D}=\frac{\lambda'}{\lambda} d^* + \frac{\lambda-\lambda'}{\lambda} \frac{d^*}{2} + E[T].
\end{equation} 

For $S>1$, since more than $1$ file is being served in a service the utility $\rho_d$ reduces $S$ times. Thus, \CG we make another approximation \CG in (\ref{eq:mg1}) and replace $\rho_d$ by $\rho_d/S$ and we get 
\begin{equation}
\label{eq:mg1_S}
d= \frac{\rho_d}{S-\rho_d}  \left(\frac{E[T^2]}{2 E[T]}\right)=f(d,S,E[T],E[T^2]).
\end{equation}
Again, the mean sojourn time can be derived using (\ref{eq:mst})\CG. It is worth noting that the SMQ may not always have queue length greater than $S$, particularly at low arrival rates. However, (\ref{eq:mg1_S}) assumes that the queue length is always at least $S$. This is a source of error which could be large when $S$ is large. Thus, (\ref{eq:mg1_S}) gives a lower bound on Type 1 delay in such cases, since $\rho_d$ is reduced $S$ times. %We will also see that, in MISO case, since service time depends on the parameter $S$, the equivalent equation for MISO SMQ (\ref{eq:smq_fp}), gives an upper bound at lower arrival rates (seen via simulations). 

Following comments are in order:
\begin{itemize}
\item The intuition behind the assumption $E[D_{1i}] = d$ is that the T1 traffic is a thinned superposition of the original \CG Poisson processes \cite{kallenberg}. With appropriate conditions (Theorem 16.18, \cite{kallenberg})\CG, the thinned, superposed process tends to a Poisson process. Thus if the T1 process can be approximated as Poisson, then PASTA holds and hence the assumption. Indeed we have seen in simulations that our assumption is true for a wide range of system parameters $\gamma$, $\lambda_{ij}$s, $N$, $K$, etc.
\item Further, the M/G/1 approximation also holds true due to the above reason.
\item Similar to \cite{TWC2021}, via a qualitatively typical example we show that the fixed point equation in (\ref{eq:mg1_S}) has a unique solution for any service time. We have verified that the curves for different $\lambda,\ K,\ S,\ N,\ \gamma$s are typically similar to those presented in this section. We consider a system with $\lambda=40$, $K=40$, $S=1$, $N=100$ and the file popularity has Zipf distribution with $\gamma=1$.  Further lets consider two pairs of first and second moments of service times $(E[T],E[T^2])=$$(S_1,S_1^2)$ and $(S_2,S_2^2)$ where $S_1=0.15$ and $S_2=0.2$. From Figure \ref{fig:unique_fd} we see that for both the pairs of service time moments have unique solutions for $d=f(d,1,E[T],E[T^2])$. It is also worth noting that the unique solutions lie in the region (right of black vertical lines in figure \ref{fig:unique_fd} for each pair of service moments) where $\rho_d/S <1$\CG. We also see from Figure \ref{fig:unique_fd}\CG, that $d^*$ increases, with increase in moments. This is typically the case for all system parameters. In fact, we have seen that $d^*$ is non-decreasing in both $E[T],\ E[T^2]$ for different sets of system parameters. 
\item To see the stability of the queue analytically, in (\ref{eq:mg1_S}), if $\lambda \rightarrow \infty$, then $f \rightarrow \frac{{E}[T^2] N/ d}{2(S-{E}[T] N/d)} < \infty $. Thus solving the quadratic equation from (\ref{eq:mg1_S}), arising in the limit, we get
\begin{equation}
d=\frac{2E[T]N\pm \sqrt{4(E[T]N)^2-8E[T^2]NS}}{4S}.
\end{equation} 
Now, ignoring the negative (irrelevant) root, for large $N$, we get $d\sim E[T]N/S$. Thus Type-1 delay (and hence the mean sojourn time in (\ref{eq:mst})) is upper bounded by $E[T]N/S$ for all $\lambda$s. Thus our queue is stable for all arrival rates, a feature unique to our queue.
%\item In \cite{TWC2021} we have also shown that for high traffic the mean T1 delay is analytically bounded by $N E[T]$. 
\end{itemize} 

\begin{figure}[!h]
\centering
\includegraphics[height=4cm,width=8cm,trim={1cm 9.5cm 1cm 9.5cm},clip]{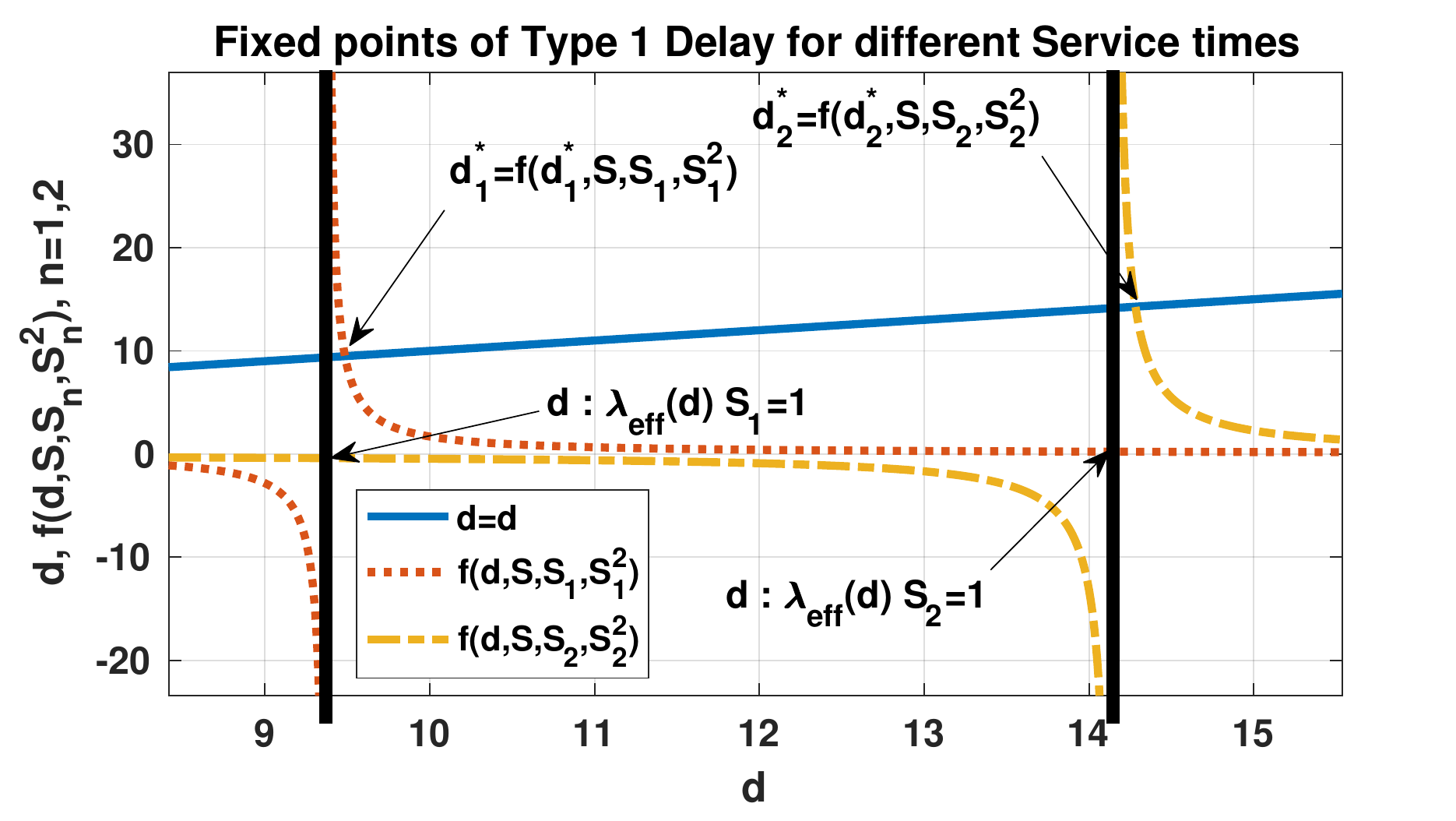}
\caption{Uniqueness of solution to the fixed point (\ref{eq:mg1_S}): $K=40,\ \lambda=40,\ N=100,\ \gamma=1$, $S_1=.15$ secs, $S_2=.2$ secs}
\label{fig:unique_fd}
\end{figure}

Now, we proceed with approximation of distribution of users for different file requests, under stationarity, in the queue.

\subsection{Approximate User Distribution for SMQ}
\label{sec:smq_approx}
\CG We are concerned with the distribution of users requesting first $S$ head of the line files (service times depend only on this). We first derive the user distribution for the case $S=1$ and then generalize it to $S\geq 1$. Towards this note that the probability that file $i\in[N]$ is in the queue is given as 
\begin{equation}
\label{eq:file_d}
P(i \in Q)=\frac{\lambda_i'}{\lambda'}.
\end{equation}

Further, the probability that a T1 request for file $i$ is made by user $j$ is given by
\begin{equation}
P_{ij}=\frac{\lambda_{ij}}{\lambda_i}.
\end{equation}

Let $\mathbb{L}_{ij}$ represent the set of users requesting file $i$ when the T1 request was made by user $j$. Now, just before the service of file $i$, the $\mathbb{L}_{ij}$ would contain all the users who have made atleast one request within time $D_{1i}$. We approximate the probability that user $k\neq j$ makes atleast one request within time $D_{1i}$ as 
\begin{equation}
\label{eq:user_request}
q^k_{ij}=P(k\in \mathbb{L}_{ij})=1-e^{-\lambda_{ik}E[D_{1i}]}=1-e^{-\lambda_{ik}d}.
\end{equation}
In the above expression we use the approximation in the previous section, $E[D_{1i}]=d$ for all $i$. Now, let $U_1\in\{0,1\}^K$ represent a $K$ length vector where $k^{th}$ element is $1$, \textit{if the $k^{th}$ user has requested the file in head of the line} and $0$ otherwise. We can now approximate the conditional user distribution for head of the line entry given the requested file is $i_1$ as
\begin{multline}
\label{eq:cond_ud_1}
P(U_1=[u_1^1,\cdots,u_1^K]\vert i_1)=\\\prod_{j=[K]}P_{i_1j} \prod_{k\neq j,k=[K]}(u_1^k q^k_{i_1j}+(1-u_1^k)(1-q^k_{i_1j}) ).
\end{multline}   
Combining (\ref{eq:file_d}) and (\ref{eq:cond_ud_1}) we get the head of the line user distribution as 

\begin{multline}
\label{eq:ud_1}
\pi^d_{1}=P(U_1=[u_1^1,\cdots,u_1^K])=\\ \sum_{i_1\in[N]}P(i_1 \in Q) P(U_1=[u_1^1,\cdots,u_1^K]\vert i_1)
\end{multline}

\textit{$S\geq 1$ case:} Proceeding as for $S=1$, the probability that first $S$ entries have files $i_s,s=[S]$, $i_l\neq i_q$, for $l\neq q$ is given as 
 
\begin{equation}
\label{eq:file_d_s}
P(i_s \in Q, s=[S])=\frac{\lambda_{i_1}'}{\lambda'}\prod_{s=[S-1]}{\frac{\lambda_{i_{s+1}}'}{\lambda'-\sum_{p=[s]}{\lambda_{i_p}'}}},
\end{equation}
where $Q$ is the set of files in SMQ. 

Note that this is an approximation of the actual distribution since the Type 1 delay will actually be less than $d$ for queue entries (file request positions) other than the head of the line entry. However, we calculate all $\lambda'_{i_s}$, $\forall i_s \in Q, s=[S]$ using the same $d$. The error due to this approximation may increase for large $S$. Nevertheless, we will see that for the values of interest of the parameter $S$ ($=1$,$2$ and $3$), the approximations work well and give low error. 

Now, let $U_s\in\{0,1\}^K$ represent a $K$ length vector where $k^{th}$ element is $1$, \textit{if the $k^{th}$ user has requested the file in $s^{th}$ entry of the queue} and $0$ otherwise. We can now define the conditional user distribution in the first $S$ head of the line entries given the files in positions $1,\cdots,S$ are $i_1,\cdots,i_S$ respectively, as
\begin{multline}
\label{eq:cond_ud}
P(U_s=[u_s^1,\cdots,u_s^K], s=[S]\vert i_s, s=[S])=\\\prod_{s=[S],j=[K]}P_{i_sj} \prod_{k\neq j,k=[K]}(u_s^k q^k_{i_sj}+(1-u_s^k)(1-q^k_{i_sj}) ).
\end{multline}   
Combining (\ref{eq:file_d_s}) and (\ref{eq:cond_ud}) we get the user distribution in the $S$ head of the line entries as 

\begin{multline}
\label{eq:ud}
\pi^d_{S}=P(U_s=[u_s^1,\cdots,u_s^K], s=[S])=\\ \sum_{i_1\neq\cdots\neq i_S\in[N]^S}P(i_s \in Q, s=[S])\times\\ P(U_s=[u_s^1,\cdots,u_s^K], s=[S]\vert i_1,\cdots,i_S)
\end{multline}

\CG The superscript $d$ is added to emphasis the dependence of the user distribution on the T1 delay $d$. Thus given the parameter $S$, the user distribution $\pi^d_{S}$ and channel gain parameter $\Sigma$ we can now calculate the first and second moments of service time $\overline{T_\Sigma(\pi^d_{S})}$ and $\overline{T^2_\Sigma(\pi^d_{S})}$. Using the moments of service time along with the M/G/1 approximation in the previous section, enables calculation of the Mean Sojourn Time of SMQ for the MISO System. For simplicity of notation we write $\overline{T_\Sigma(\pi^d_{S})}$ and $\overline{T^2_\Sigma(\pi^d_{S})}$ as $T(d)$ and $T^2(d)$ respectively.

\subsection{Theoretical Mean Sojourn Time for SMQ MISO}
\label{eq:theory_algo}
Having derived $\pi^d_{S}$, we are now equipped to formulate the algorithm to derive the mean sojourn times of SMQ for multi-antenna case. Towards this we first rewrite, (\ref{eq:mg1_S}) as 
\begin{equation}
\label{eq:smq_fp}
d'=f(d',S,T(d),T^2(d))= \frac{T(d)\lambda'(d')}{S-T(d)\lambda'(d')}  (\frac{T^2(d)}{2 T(d)}),
\end{equation}
Notice that the fixed point is over the variable $d'$ and also (\ref{eq:smq_fp}) brings in the dependence of service times on $d$. Though the ideal fixed point equation should be with $d'$ replacing $d$, we will see that keeping $d$ and $d'$ as different variables is useful in iterative algorithm formulation. Also, note that the M/G/1 approximation (\ref{eq:smq_fp}) subsumes that the service times are $i.i.d$ which is not true in reality. However, we will see that this approximation works well nevertheless\CG. We will also see that unlike (\ref{eq:mg1_S}), in the MISO system, since service time depends on the parameter $S$, the equivalent equation (\ref{eq:smq_fp}), gives an upper bound at lower arrival rates (seen via simulations)\CG. The algorithm to calculate Mean sojourn time is described in Algorithm \ref{algo:mst_miso_smq}. This algorithm works as follows:

\begin{itemize}
\item For any given $d$ the moments $T(d)$ and $T^2(d)$ are bounded (the boundedness arguments are similar to the $E[T]<\infty$ argument in Section \ref{sec:station}). 
\item Now, for a given $d$ we get $T(d)=t1$ and $T^2(d)=t2$ as follows: $M$ samples of service times are obtained using $P_X=P1,P2$ or $P3$ on $M$ realisations of user distribution and the corresponding channel gains using $\pi^d_{S}$, $\Sigma$ as shown in Algorithm \ref{algo:mst_miso_smq}. The $t1$ and $t2$ are then obtained by taking sample average of $M$ service times and square of service times.
\item Now iterating the fixed point equation (\ref{eq:smq_fp}) with $t1$ and $t2$, we get a unique $d^*$ such that $d^*=f(d^*,S,t1, t2)$.
\item We replace $d\leftarrow d^*$ and repeat the above steps till $\vert d_{n}-d_{n-1} \vert < \epsilon$ for a given $\epsilon$, where $d_n$ is the value of $d$ in the $n^{th}$ iteration.  
\end{itemize}

\begin{algorithm}[h!]
\CG
\SetAlgoLined
\caption{Mean Sojourn Time for MISO SMQ with MMF, MMF-SIC and MMF-RS beamformers}
\label{algo:mst_miso_smq}
\KwIn{$L$, $K$, $N$, $S$, $P$, $\lambda$, $\gamma$, $\Sigma$, $\epsilon$: error, $M$: Samples}
Initialize: $\overline{T}=1,\overline{T^2}=1$, $d_{prev}=0,d=2\epsilon$\\
$P_X=P1,P2$ or $P3$\\
\While{$\vert d-{d_{prev}}\vert>\epsilon$} 
{
$d_{prev}\leftarrow d$\\
$d_{old}=0$\\
\While{$\vert d-{d_{old}}\vert>\epsilon$}
{
$d_{old}\leftarrow d$,\\ $d\leftarrow f(d,S,\overline{T}, \overline{T^2})$
}
Use (\ref{eq:ud}) and calculate $\pi^d_S$ for the new $d$\\ 
\For{$m=1$ \KwTo $M$}
{
Sample $U_s,s\in[S]$ using $\pi^d_S$\\
Sample $\textbf{H}$ (Channel Gains) using $\Sigma$\\

Solve $P_X$ for $U_s,s\in[S],\ \textbf{H}$ and obtain the service time $T^*$\\
$T_m\leftarrow T^*$
}
${T(d)}\leftarrow \frac{\sum_{m\in[M]}{T_m}}{M}$, ${T^2(d)}\leftarrow \frac{\sum_{m\in[M]}{T_m^2}}{M}$\\

$\overline{T}\leftarrow {T(d)}$, $\overline{T^2}\leftarrow {T^2(d)}$ 
}

${d}^{*}\leftarrow d$\\
/*Calculate Mean Sojourn Time*/
$\overline{D}=\frac{\lambda'}{\lambda} d^* + \frac{\lambda-\lambda'}{\lambda} \frac{d^*}{2} +\overline{T} $\\
\KwOut{$\overline{D}$: Mean Sojourn Time}
\end{algorithm}

%We will see that the mean sojourn time calculated by the algorithm is very close to the simulation for wide range of system parameters. We make the following comments on the Algorithm \ref{algo:mst_miso_smq}:
We have seen that, starting from any $d$ Algorithm \ref{algo:mst_miso_smq} converges to a unique value of Type 1 delay and hence to a unique mean sojourn time. To show this we continue with our typical example in Figure \ref{fig:unique_fd} and provide some heuristics for the convergence of Algorithm \ref{algo:mst_miso_smq} to a unique fixed point of (\ref{eq:smq_fp}). We fix $L=16$, $g=1$ and $P_X=P1$. In general, for any system parameters the plots and the following heuristics are similar. %First we show the existence of the fixed point and then show the uniqueness via numerical evaluation.

\subsection{Heuristics for Convergence of Algortihm \ref{algo:mst_miso_smq}}
\label{sec:heuristics}
We begin by noting that as $d$ increases the number of users in the $S$ entries starting from head of the line, increases. This can be easily seen from (\ref{eq:user_request}) and (\ref{eq:ud}). Thus, naturally the service time increases with $d$ as the number of users in the system increase for all $P_X=P1,P2$ and $P3$. Further for very large $d$, we see that the user distribution is such that all the users are present in all the $S$ entries starting from head of the line. However, we note that the service moments remain bounded (from arguments similar to that of $E[T]<\infty$ in Section \ref{sec:station}).

Further, we have seen that the equation in (\ref{eq:mg1_S}) has a unique fixed point $d^*$ for a given set of values of service moments. Similarly, (\ref{eq:smq_fp}) also has a unique fixed point (here the fixed point equation is in variable $d'$) for each value of $d$. Thus in each iteration, for a given $d$ Algorithm 1 gives a unique ${T}(d)$, ${T}^2(d)$ and $d^*$, where $d^*$ is the fixed point of the equation $d'=f(d',S,T(d),T^2(d))$. In Figure \ref{fig:SMQ_MISO_Theory}, we see that $f(d^*,S,T(d),T^2(d))$ is monotonically increasing with $d$. This is again because of monotonic increase in service time moments with $d$, because of which $d^*$ is non-decreasing as seen in Section \ref{sec:smq_mst}. Further, since Type-I delay $d^*$ and $T(d),T^2(d)$ are bounded, $f(d^*,S,T(d),T^2(d))$ is also bounded for all values of $d$. This can be seen in Figure \ref{fig:SMQ_MISO_Theory} where $f$ is bounded and non-decreasing. 

Now we note that, $T(d)$ and $T^2(d)$ are strictly positive even for $d=0$ since $P_X=P1,P2,P3$ give strictly positive service times even with one user in each entry (the distributions (\ref{eq:user_request}) and (\ref{eq:ud}) are such that there is atleast one user in each entry even for $d=0$). Thus the fixed point equation (\ref{eq:smq_fp}) gives a non-zero positive $d^*$ even for $d=0$. Further assuming continuity (proof of which is beyond the scope of analysis since continuity of $f$ depends on continuity of service moments $T(d)$ and $T^2(d)$ which is further dependent on the NP-Hard problems $P1,\ P2$ or $P3$) the straight line $d$ intersects $f$ at at least one point. We have seen from evaluations of Algorithm 1 for different system parameters that this point is unique.  
%Now, we can say that the line $d'=d$ intersects $d'=f(d^*,S,T(d),T^2(d))$ at atleast one point if $f(d^*,S,T(d),T^2(d))>0$ for $d$ near $0$. To see this let $d=\epsilon$, where $\epsilon>0$, is arbitrarily close to zero. Now, $f(d^*,S,T(\epsilon),T^2(\epsilon))= \frac{T^2(\epsilon)\lambda'(d^*)}{2 (S-\lambda'(d^*)T(\epsilon))} $, which is strictly positive. This is because the service moments are positive for any $\epsilon>0$ and the $d^*$ choosen by (\ref{eq:smq_fp}) is unique and such that $\lambda'(d^*)T(\epsilon)<S$. Thus there exists atleast one fixed point for Algorithm 1.    %  %We note that $d^*,f(d*,.,.,)$ also increase with increase in $d$. However since the Type 1 delay of SMQ is bounded, the monotonic $d^*$ and $f(d*,.,.,)$ converges to a constant value. Thus    

From Figure \ref{fig:SMQ_MISO_Theory} we see that for a typical case, there exists a unique fixed point which the algorithm achieves, starting from any $d$\CG. For every $d$, the algorithm computes the service moments and the corresponding $d^*$ using (\ref{eq:smq_fp}). The algorithm then proceeds iteratively by replacing the $d$ with the new $d^*$ as explained in Section \ref{eq:theory_algo}\CG. The dashed arrows in Figure \ref{fig:SMQ_MISO_Theory} show different paths, the algorithm takes starting from different $d$, to reach this unique fixed point.

\begin{figure}[!h]
\centering
\includegraphics[height=4cm,width=8cm,trim={1cm 9.5cm 1cm 9.5cm},clip]{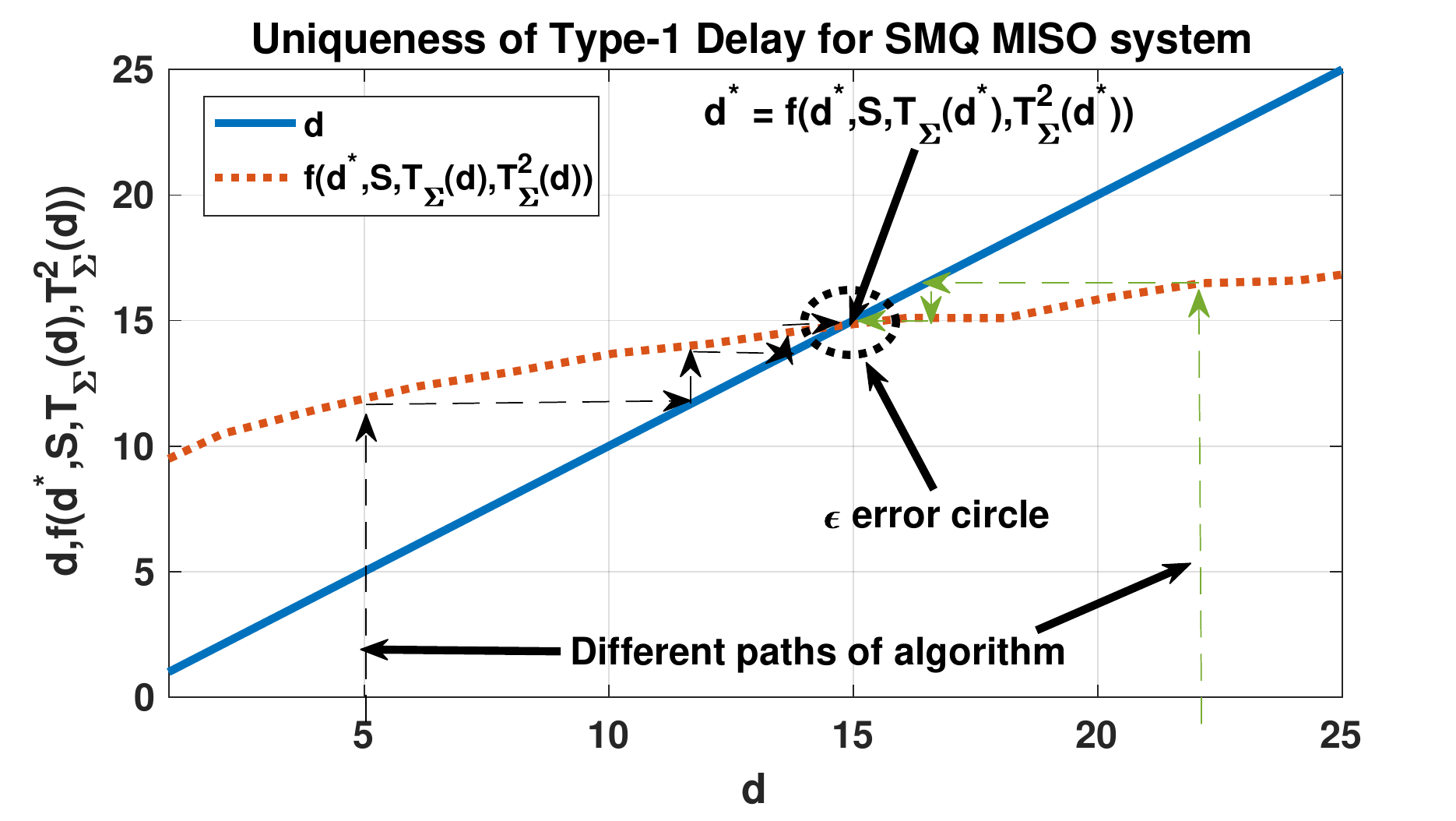}
\caption{Uniqueness of Mean Sojourn Time for SMQ MISO case: $L=16,\ K=40,\ \lambda=40,\ g=1,\ P=10,\ N=100,\ N_0=1,\ \gamma=1$, }
\label{fig:SMQ_MISO_Theory}
\end{figure}

\subsection{Evaluation of Algorithm \ref{algo:mst_miso_smq}:}
\label{sec:eval_algo1}
In this section we evaluate the accuracy of the approximation of mean sojourn time provided by Algorithm \ref{algo:mst_miso_smq}. Towards this we look at a system with $L=16$, $P=10$, $N_0=1$, $N=100$, $F=100Mb$, $B=100MHz$ and the file popularity follows Zipf distribution with parameter $\gamma=1$. The channels are modelled as complex Gaussian flat fading channels with $g=1$. To evaluate the goodness of fit of our approximation we vary the system parameters as $\lambda=10,20$ and $40$ representing low, medium and high loads, $K=4,10,20,30,40$ and $S=1,2$. We compare the mean sojourn times obtained via full system simulation with the ones obtained via Algorithm \ref{algo:mst_miso_smq}. The simulations are run for $10000$ services of the queue at the BS, to let the queues reach stationarity. The mean sojourn times in simulations are calculated using sample average of sojourn times seen during the simulation. To keep the service times bounded we fix $r_\epsilon=0.01$, (see Section \ref{sec:station}). Th algorithm parameters are fixed as follows $P_X=P1$, $\epsilon=0.1$ and $M=500$. The comparison here is shown for MMF beamforming (P1), other beamforming cases (P2 and P3) also have similar performance. 

From Figure \ref{fig:Th_Sim_SMQ} we see that Algorithm 1 (Theory) gives a mean sojourn time with less than $10\%$ error for cases $\lambda=20$ and $40$ (medium and high load) and $<15\%$ for the case $\lambda=10$ (low load case). Slight increase in error for $\lambda=10$, is because of less merging opportunities in low load condition. Hence the system (considering Type-1 traffic) deviates from the M/G/1 like system. We point out that in a system design, medium and high load conditions are more critical where our algorithm gives high accuracies. Thus Algorithm 1 gives a quick method for system designers to evaluate the performance of the system without using complicated full system simulations.     

\begin{figure}[!h]
\centering
\includegraphics[height=6cm,width=8cm,trim={1cm 7cm 1cm 7cm},clip]{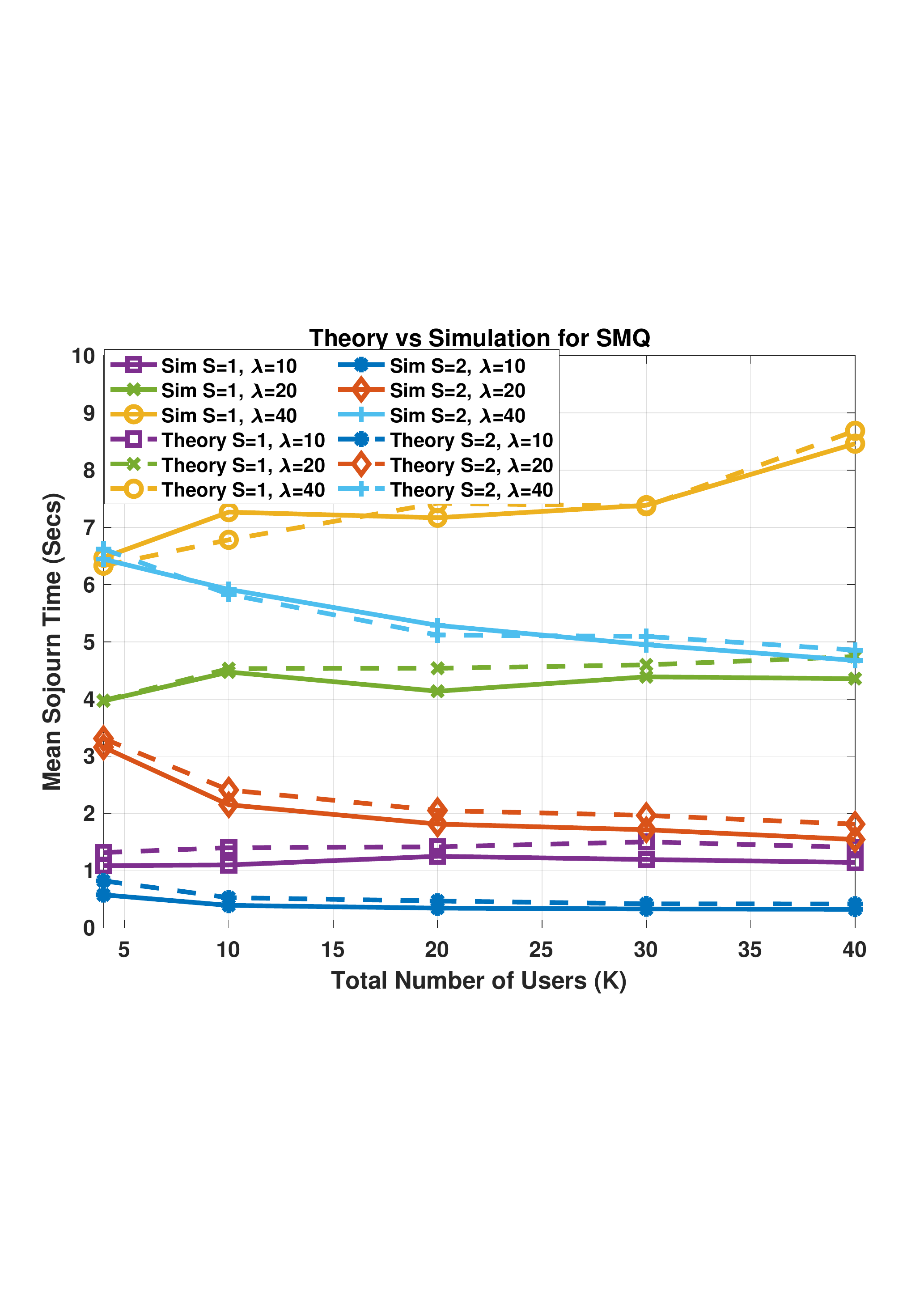}
\caption{Comparison of Theory vs Simulation for SMQ: $L=16$, $P=10$, $N_0=1$, $N=100$, $F=100Mb$, $B=100MHz$, $g=1$}
\label{fig:Th_Sim_SMQ}
\end{figure}

Now we proceed to derive approximate mean sojourn time for DSMQ. First we derive the mean sojourn time for general independent service times.  

\subsection{Approximation of Mean Sojourn Time of DSMQ for $GI$ Services}
\CG We begin by noting that DSMQ is a Polling system \cite{takagibook}, with zero switching time for the server, with E-limited service discipline \cite{fhurmann}.

There are several works which use Pseudo Conservation Laws \cite{boxma,takagi,fhurmann} to derive approximate waiting times in the queues of polling systems. However, these approaches are not conducive to solving the fixed point equations (similar to (\ref{eq:mg1_S}) and (\ref{eq:smq_fp})) arising due to merging of traffic and hence cannot be used directly.  We propose an alternate approach where we consider SMQ-G and SMQ-B as independent M/G/1 queues with some modifications to the service times\CG. %In section \ref{sec:sim} we will see that our approach gives an accuracy of $>90\%$ for Medium and High arrival rates and $>85\%$ for low arrival rates\CE. 
Towards this we first derive independent M/G/1 approximations for DSMQ for both the queues when the service times are i.i.d. 

\CG Let $d_1$ and $d_2$ be the Type 1 delays of SMQ-G and SMQ-B of DSMQ. Here, we are again making an approximation that for each queue the corresponding file requests see the same mean Type 1 delays, for reasons similar to SMQ.   

Let $\overline{T_1}$ and $\overline{T_2}$ be the mean service times and let $\overline{T^2_1}$ and $\overline{T^2_2}$ be the second moments of service times of SMQ-G and SMQ-B, respectively. Now, SMQ-G can be approximated as an independent M/G/1 queue with the first and second moments of service times defined as
\begin{multline}
\label{eq:mix_serve_G}
\overline {T_G}=\overline{T_1}+\overline{T_2}/(C-1),\ 
\overline {T^2_G}=\overline{T^2_1}+\overline{T^2_2}/(C-1)
\end{multline}
and effective arrival rate as in Section \ref{sec:smq_approx} considering only rates $\Lambda_G=\{\lambda_{ij}$, $i\in[N]$ and $j\in \mathcal{K}_G\}$, where $\mathcal{K}_G$ is the set of good channel users. Similarly, SMQ-B can be approximated as an M/G/1 queue with the first and second order mean service times given as
\begin{multline}
\label{eq:mix_serve_B}
\overline {T_B}=\overline{T_1}(C-1)+\overline{T_2},\ 
\overline {T^2_B}=\overline{T^2_1}(C-1)+\overline{T^2_2}.
\end{multline}
Also, the effective arrival rate as in section \ref{sec:smq_approx} considering only rates $\Lambda_B=\{\lambda_{ij}$, $i\in[N]$ and $j\in \mathcal{K}_B\}$, where $\mathcal{K}_B$ is the set of bad channel users. Now similar to SMQ, we can approximate mean T1 delay of SMQ-G using the fixed point equation 
\begin{equation}
\label{eq:mg1_G}
d_1= f_G(d_1,C,S,\overline{T_G},\overline{T^2_G})=\frac{\rho_{d_1}}{S-\rho_{d_1}}  (\frac{\overline{T^2_G}}{2 \overline{T_G}}),
\end{equation}
where $\rho_{d_1}=\overline{T_G}\lambda_G'(d_1)$, and $\lambda_G'=\sum_{i=1}^N \frac{\sum_{j\in \mathcal{K}_G\}}\lambda_{ij}}{1+ d_1{\sum_{j\in \mathcal{K}_G\}}\lambda_{ij}}}$.

Also, for SMQ-B the mean T1 delay is obtained using the fixed point equation 
\begin{equation}
\label{eq:mg1_B}
d_2= f_B(d_2,C,S,\overline{T_B},\overline{T^2_B})=\frac{\rho_{d_2}}{S-\rho_{d_2}}  (\frac{\overline{T^2_B}}{2 \overline{T_B}}),
\end{equation}
where $\rho_{d_2}=\overline{T_B}\lambda_B'$, and $\lambda_B'(d_2)=\sum_{i=1}^N \frac{\sum_{j\in \mathcal{K}_B}\lambda_{ij}}{1+ d_2{\sum_{j\in \mathcal{K}_B}\lambda_{ij}}}$. Thus T1 delays, $d_1$ and $d_2$  for both SMQ-G and SMQ-B can be calculated using recursive equations (\ref{eq:mg1_G}) and (\ref{eq:mg1_B}) independently. The uniqueness of $d_1$ and $d_2$ holds true even in this case as both the queues are treated independently. Further, the user distributions $\pi^{d_1}_S$ and $\pi^{d_2}_S$ for SMQ-G and SMQ-B can be derived in the same manner as in Section \ref{sec:smq_approx}, with rates $\Lambda_G$ and $\Lambda_B$ respectively. 

\subsection{Theoretical Mean Sojourn Time for DSMQ MISO}
\CG We can extend Algorithm \ref{algo:mst_miso_smq} to derive mean sojourn time for the DSMQ MISO system. The DSMQ can again be considered as a Polling system with $\Lambda_G$ arrivals to SMQ-G and $\Lambda_B$ arrivals to SMQ-B. Similar to $T(d)$ and $T^2(d)$ in the SMQ case, the first and second moments of the service times of SMQ-G and SMQ-B are represented as $T_n(d_n), T_n^2(d_n)$, $n=1,2$, respectively.  The moments $T_n(d_n), T_n^2(d_n)$, $n=1,2$ are dependent on $\Sigma$, $\pi^{d_1}_S$ and $\pi^{d_2}_S$. Similar to the SMQ case we replace $\overline{T_n}$ and $\overline{T^2_n}$, $n=1,2$, in equations (\ref{eq:mix_serve_G}) and (\ref{eq:mix_serve_B}) with $T_n(d_n), T_n^2(d_n)$, $n=1,2$ respectively and rewrite (\ref{eq:mg1_G})  and (\ref{eq:mg1_B}) as follows:

\begin{equation}
\label{eq:mg1_miso_G}
d_1'=f_G(d_1',C,S,\overline{T_G},\overline{T^2_G})=\frac{\lambda_G(d_1')\overline{T_G}}{S-\lambda_G(d_1')\overline{T_G}}  (\frac{\overline{T^2_G}}{2 \overline{T_G}}).
\end{equation}
Similarly for SMQ-B the mean T1 delay is obtained using the fixed point equation
\begin{equation}
\label{eq:mg1_miso_B}
d_2'= f_B(d_2',C,S,\overline{T_B},\overline{T^2_B})=\frac{\lambda_B(d_2')\overline{T_B}}{S-\lambda_B(d_2')\overline{T_B}}  (\frac{\overline{T^2_B}}{2 \overline{T_B}}).
\end{equation}
Note that unlike SMQ case, the $\overline{T_G}$ and $\overline{T_B}$ are both dependent on both $d_1$ and $d_2$ as:
\begin{equation}
\label{eq:mix_serve}
\begin{split}
\overline {T_G}&=T_1(d_1)+T_2(d_2)/(C-1),\\ 
\overline {T^2_G}&=T^2_1(d_1)+T^2_2(d_2)/(C-1),\\
\overline {T_B}&=T_1(d_1)(C-1)+T_2(d_2),\\
\overline {T^2_B}&=T^2_1(d_1)(C-1)+T^2_2(d_2).
\end{split}
\end{equation}
However, similar to the SMQ case, the fixed point equation is over the variable $d_1'$ alone for (\ref{eq:mg1_miso_G}) and $d_2'$ alone for (\ref{eq:mg1_miso_B}).

The Algorithm \ref{algo:mst_miso_dsmq} proceeds in a manner similar to Algorithm \ref{algo:mst_miso_smq}, except for the coupling in service times coming from equations (\ref{eq:mix_serve_G}) and (\ref{eq:mix_serve_B}). The pseudo-code to calculate the mean sojourn time is given in Algorithm \ref{algo:mst_miso_dsmq}. 

\CG The uniqueness of the fixed points of Algorithm \ref{algo:mst_miso_dsmq} is seen by heuristics, similar to that of SMQ case (Section \ref{sec:heuristics}). This is because the fixed point iterations run in a decoupled manner even though $\overline{T_G}$ and $\overline{T_B}$ are coupled\CG. 
%\begin{itemize}
%\item ---------- Need to talk about convergence here as well -----------
%\end{itemize}

\begin{algorithm}[h!]
\CG
\SetAlgoLined
\caption{Mean Sojourn Times for MISO DSMQ with MMF, MMF-SIC and MMF-RS beamformers}
\label{algo:mst_miso_dsmq}
\KwIn{$L$, $K$, $N$, $S$, $P$, $\Lambda_G$, $\Lambda_B$, $\Sigma$, $\epsilon$: error, $M_1,M_2$: Samples}
Initialize: $T_n(d_n)=1,T^2_n(d_n)=1$, $n=1,2$, $d_{prev1}=0,d_1=2\epsilon$, $d_{prev2}=0,d_2=2\epsilon$\\
$P_X=P1,P2$ or $P3$\\
\While{$\vert d_1-{d_{prev1}}\vert>\epsilon$ or $\vert d_1-{d_{prev1}}\vert>\epsilon$} 
{
Calculate $\overline {T_G}$, $\overline {T^2_G}$, $\overline {T_B}$, $\overline {T^2_B}$ as in (\ref{eq:mix_serve})\\
%$\overline {T_G}=T_1(d_1)+T_2(d_2)/(C-1)$\\ 
%$\overline {T^2_G}=T^2_1(d_1)+T^2_2(d_2)/(C-1)$\\
%
%$\overline {T_B}=T_1(d_1)(C-1)+T_2(d_2)$\\ 
%$\overline {T^2_B}=T^2_1(d_1)(C-1)+T^2_2(d_2)$\\
$d_{prev1}\leftarrow d_1$, $d_{prev2}\leftarrow d_2$\\
$d_{old}=0$\\
\While{$\vert d_1-{d_{old}}\vert>\epsilon$}
{
$d_{old}\leftarrow d_1$,\\ $d_1\leftarrow f_G(d_1,S,\overline{T_G}, \overline{T_G^2})$
}
$d_{old}=0$\\
\While{$\vert d_2-{d_{old}}\vert>\epsilon$}
{
$d_{old}\leftarrow d_2$,\\ $d_2\leftarrow f_B(d_2,C,S,\overline{T_B}, \overline{T_B^2})$
}
Use (\ref{eq:ud}), $\Lambda_G,\ \Lambda_B$ and calculate $\pi^{d_n}_S$ for the new $d_n$, $n=1,2$\\ 
\For{$n=1$ \KwTo $2$}
{
\For{$m=1$ \KwTo $M_n$}
{
Sample $U_s,s\in[S]$ using $\pi^{d_n}_S$\\
Sample $\textbf{H}$ (Channel Gains) using $\Sigma$\\
Solve $P_X$ for $U_s,s\in[S]$, $\textbf{H}$ and obtain the service time $T^*$\\
$T_m\leftarrow T^*$
}
${T_n(d_n)}\leftarrow \frac{\sum_{m\in[M_n]}{T_m}}{M_n}$, ${T^2_n(d_n)}\leftarrow \frac{\sum_{m\in[M_n]}{T_m^2}}{M_n}$\\
}
}
$\lambda_G=\sum_{i\in[N],\ j\in{\mathcal{K}_G}}\lambda_{ij}$, 
$\lambda_B=\sum_{i\in[N],\ j\in{\mathcal{K}_B}}\lambda_{ij}$\\
$ $\\s
/*Calculate Mean Sojourn Times $D_1$ for SMQ-G and $D_2$ for SMQ-B*/\\
${D_1}=\frac{\lambda_G'}{\lambda_G} d_1 + \frac{\lambda_G-\lambda_G'}{\lambda_G} \frac{d_1}{2} +{T_1(d_1)} $\\
${D_2}=\frac{\lambda_B'}{\lambda_B} d_2 + \frac{\lambda_B-\lambda_B'}{\lambda_B} \frac{d_2}{2} +{T_2(d_2)} $\\
\KwOut{${D_1}, D_2$: Mean Sojourn Times}
\end{algorithm}

\subsection{Evaluation of Algorithm \ref{algo:mst_miso_dsmq}}
In this section we evaluate the accuracy of the approximation of mean sojourn time provided by Algorithm \ref{algo:mst_miso_dsmq}. Towards this we consider a system with $K=40$ users among which $K_G=20$ are good users and the rest $K_B=20$ are bad users. The mean fading for good users is $g_k=0 dB,\ \forall\ k\in[K_G]$ and for bad users is $g_k=-15 dB,\ \forall\ k\in [K]\setminus[K_G]$. We fix $\lambda=40$. The rest of the parameters are same as in Section \ref{sec:eval_algo1}. We also vary the system parameter $S=1,2$. The algorithm parameters are fixed as follows: $P_X=P1$, $\epsilon=0.1$ and $M_1=500,\ M_2=500$. The comparison here is shown for MMF beamforming (P1). Other beamforming cases (P2 and P3) have similar performance. 

From Figure \ref{fig:Th_Sim_DSMQ} we see that Algorithm \ref{algo:mst_miso_dsmq} (Theory) gives a mean sojourn time with less than $10\%$ error for the case $\lambda=40$. Here we have considered only the high load condition. The Algorithm \ref{algo:mst_miso_dsmq} performs similar to Algorithm \ref{algo:mst_miso_smq} in low load conditions as well, however we do not present here for clarity of presentation.

\begin{figure}[!h]
\centering
\includegraphics[height=6cm,width=8cm,trim={.5cm 7cm 1cm 7cm},clip]{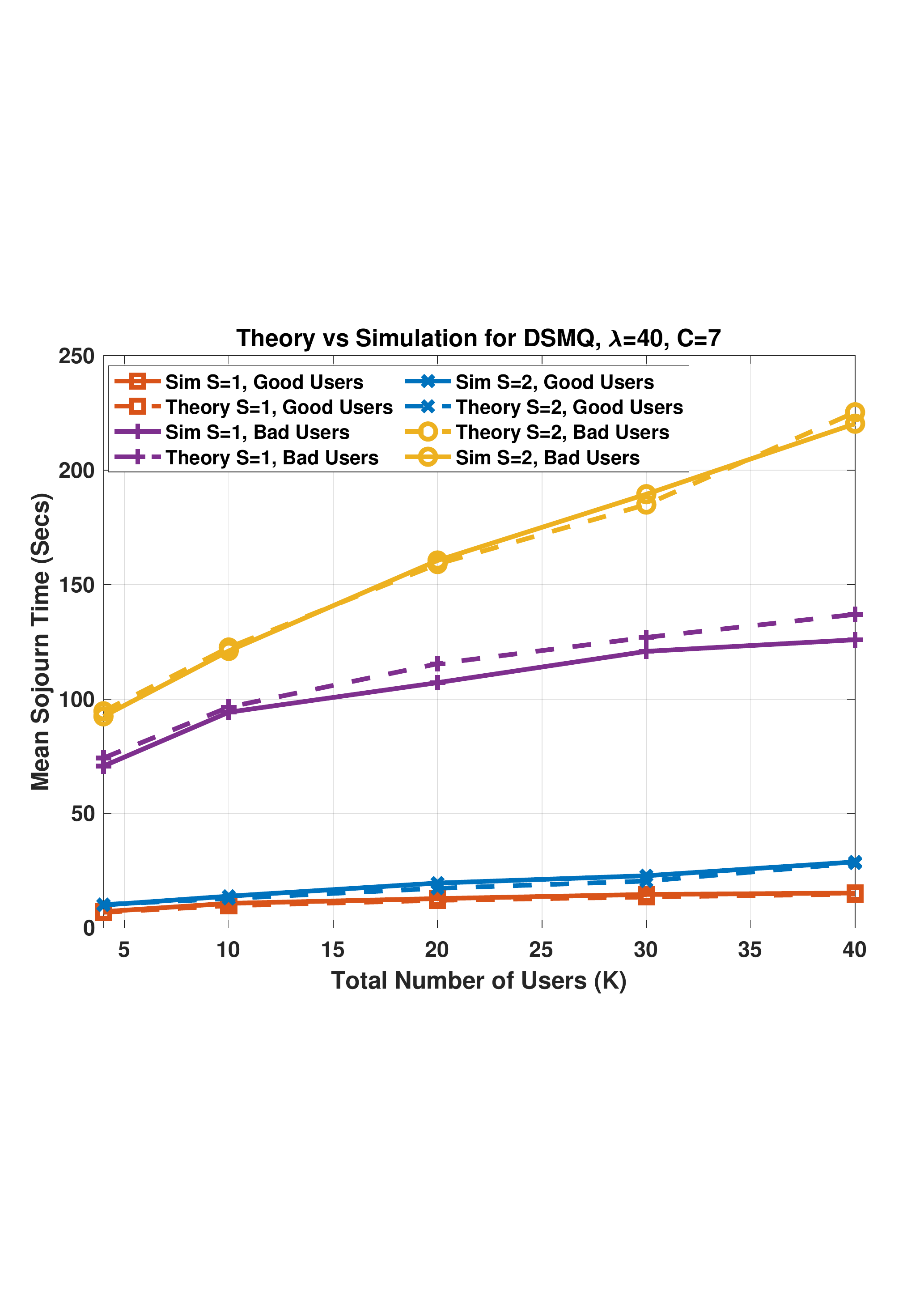}
\caption{Comparison of Theory vs Simulation for DSMQ (Heterogeneous case): $L=16, \lambda=40,\ K=40,\ K_B=K_G=20,\ P=10,\ N=100,\ N_0=1,\ \gamma=1$}
\label{fig:Th_Sim_DSMQ}
\end{figure}

\textit{Note:} \CG We point out that we are not aware of any other study providing theoretical approximations of such complicated queueing systems. Our approximations provide %We point out that the Algorithms \ref{algo:mst_miso_smq} and \ref{algo:mst_miso_dsmq} provides 
a theoretical backing for our queues and provide a means to evaluate the performance of queues for varied system settings without performing full system simulation which can be computationally quite expensive and may take longer to reach stationarity depending on the setting. Our algorithms on the other hand approximate the stationary user distribution in the queue, use closed form expressions for mean delays (Type-1,2 and Mean sojourn time) and typically reach the fixed point within $4$ iterations for all settings. Thus the algorithms are quite useful in system design where engineers may need to do a quick system evaluation with different types of queues. These theoretical approximations also provide insight into how the system behaves and the effect of different parameters on the system performance.

Further our Algorithms are not restricted to beamforming cases presented in this work but can be adapted to any beamforming strategies a designer may wish to apply by appropriately changing the optimization. It would also be much faster if one could get a closed form expressions for the moments of service time for a given user distribution (which typically is not the case in the current multi-user MIMO literature).

We would also like to point out that the Algorithm 2 can be used to analyse any M/G/1 type E-limited Polling system with zero switchover times to get the theoretical waiting times for users, which maybe of independent interest\CE.

%In the following section we present the numerical evaluations of the performance our queues in MISO setup\CE.

\section{Simulation Results and Discussion}
\label{sec:sim}
In this section we present, simulation results and comparison of different beamforming schemes with SMQ and DSMQ. %Among all the physical layer schemes, rate splitting has been shown to provide optimal transmission rate among all schemes in both heavy and light traffic scenarios (\cite{RS2User,RS2017}). We revisit this claim from queueing perspective and compare the mean sojourn time of the proposed schemes including rate splitting. %A key difference is that the earlier studies consider scenarios with a fixed number of active users and request traffic, where as, 
%In contrast to earlier studies we consider the dynamic scenario of varying user groups and traffic as seen in practical networks. Further, the effect of common users across groups and queueing delays is not seen in the above mentioned studies. \CE In the following 
We consider two cases of channel statistics. First with homogenous channel statistics across users and second with heterogenous channel statistics where there are users with good and bad channel statistics\CE. All our simulations use complex Gaussian flat fading channels as explained in Section \ref{sec:system_model}. To avoid arbitrarily large service times, we fix $r_\epsilon=0.01$, (see Section \ref{sec:station}). %, i.e., if the rate given by (\ref{eq:opt}), (\ref{eq:opt_SIC}) or (\ref{eq:opt_RS}) is less than $0.01$, then we assume the rate is $0.01$. However, we have choosen this rate such that the event of (\ref{eq:opt}), (\ref{eq:opt_SIC}) or (\ref{eq:opt_RS}) giving this rate is extremely rare. This is another way of truncating the absolute value of channel fading near zero. 
This ensures that $E[T] <\infty$, needed in Proposition \ref{prop:station}, for all our schemes.  All our simulations are run for $10000$ services of the queue at the BS, to let the queues reach stationarity. \textit{\CK The mean sojourn times are estimated using sample average of sojourn times seen during the simulation}\CE. 
\begin{figure}[h!]
\centering
\includegraphics[height=6cm,width=8cm,trim={1cm 7cm 1cm 7cm},clip]{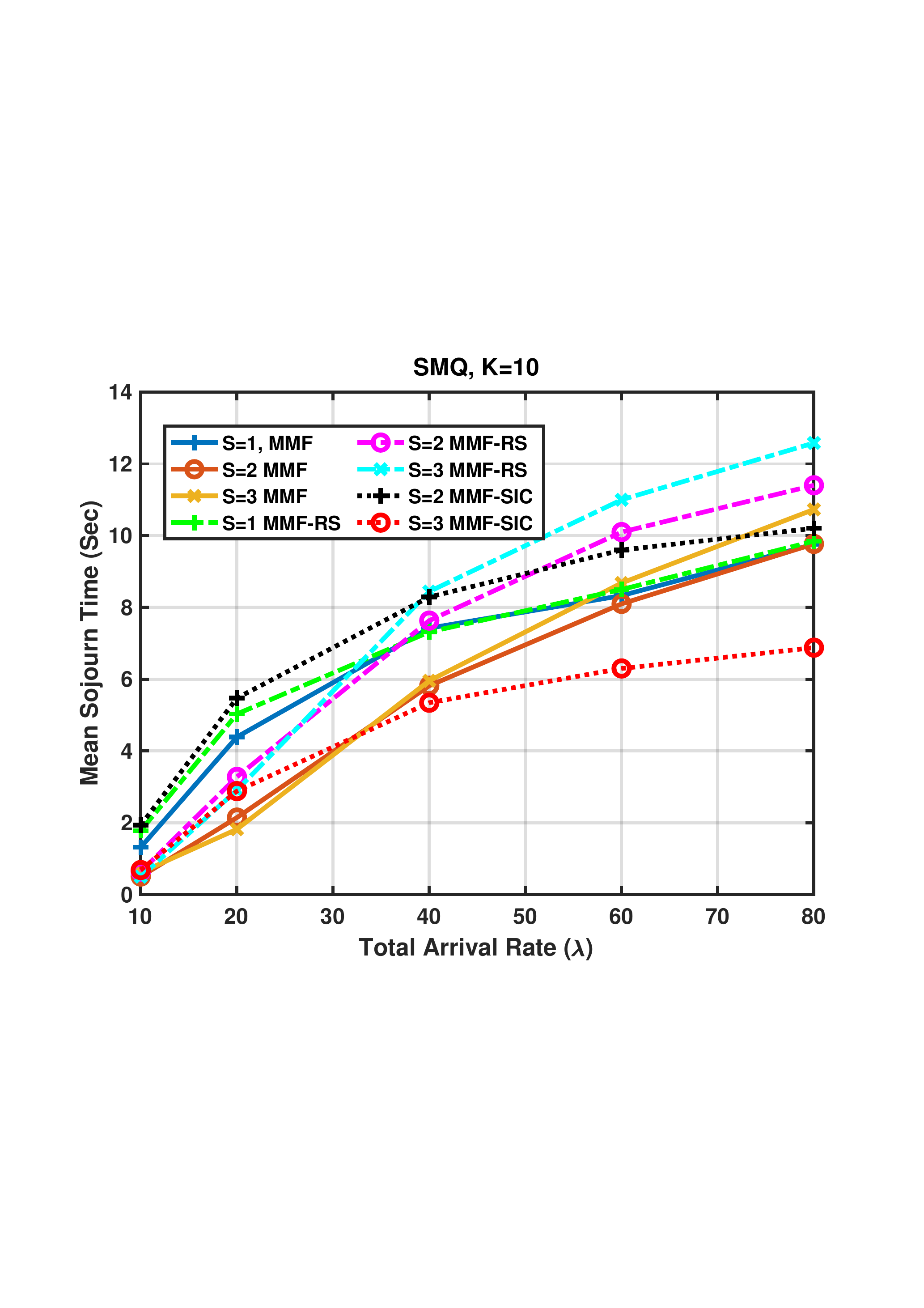}
\caption{Comparison of Beamforming schemes (Homogeneous Case) with SMQ: $K=10,\ P=10,\ N=100,\ N_0=1,\ \gamma=1,\ g=1$}
\label{fig:user_10}
\end{figure}
%%%%%% Commenting 20 User Figure for ICC 
%\begin{figure}[!h]
%\centering
%\includegraphics[height=6cm,width=8cm,trim={1cm 7cm 1cm 7cm},clip]{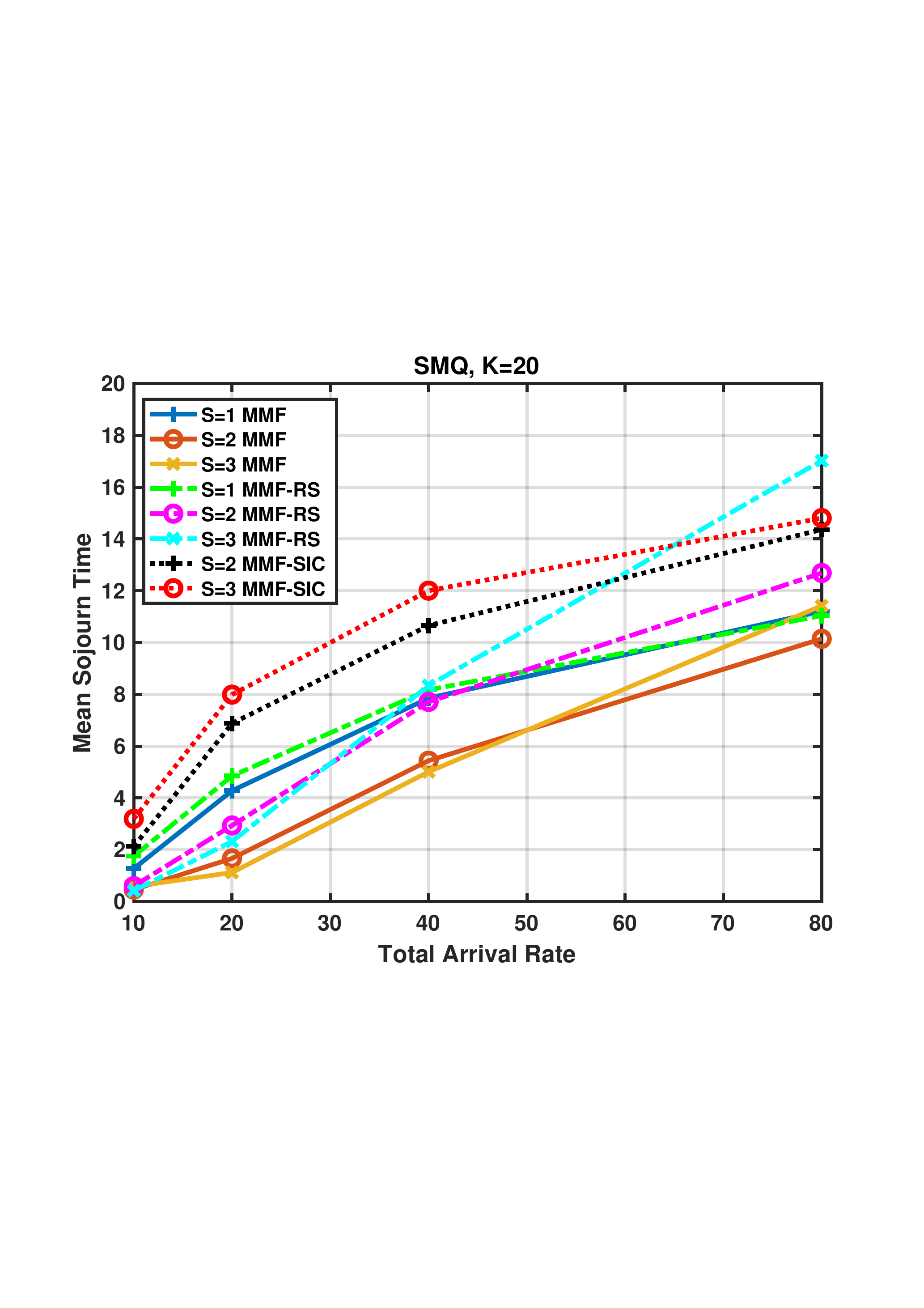}
%\caption{Comparison of Beamforming schemes (Homogeneous Case) with SMQ: $K=20,\ P=10,\ N=100,\ N_0=1,\ \gamma=1,\ g=1$}
%\label{fig:user_20}
%\end{figure}
\begin{figure}[h!]
\centering
\includegraphics[height=6cm,width=8cm,trim={1cm 7cm 1cm 7cm},clip]{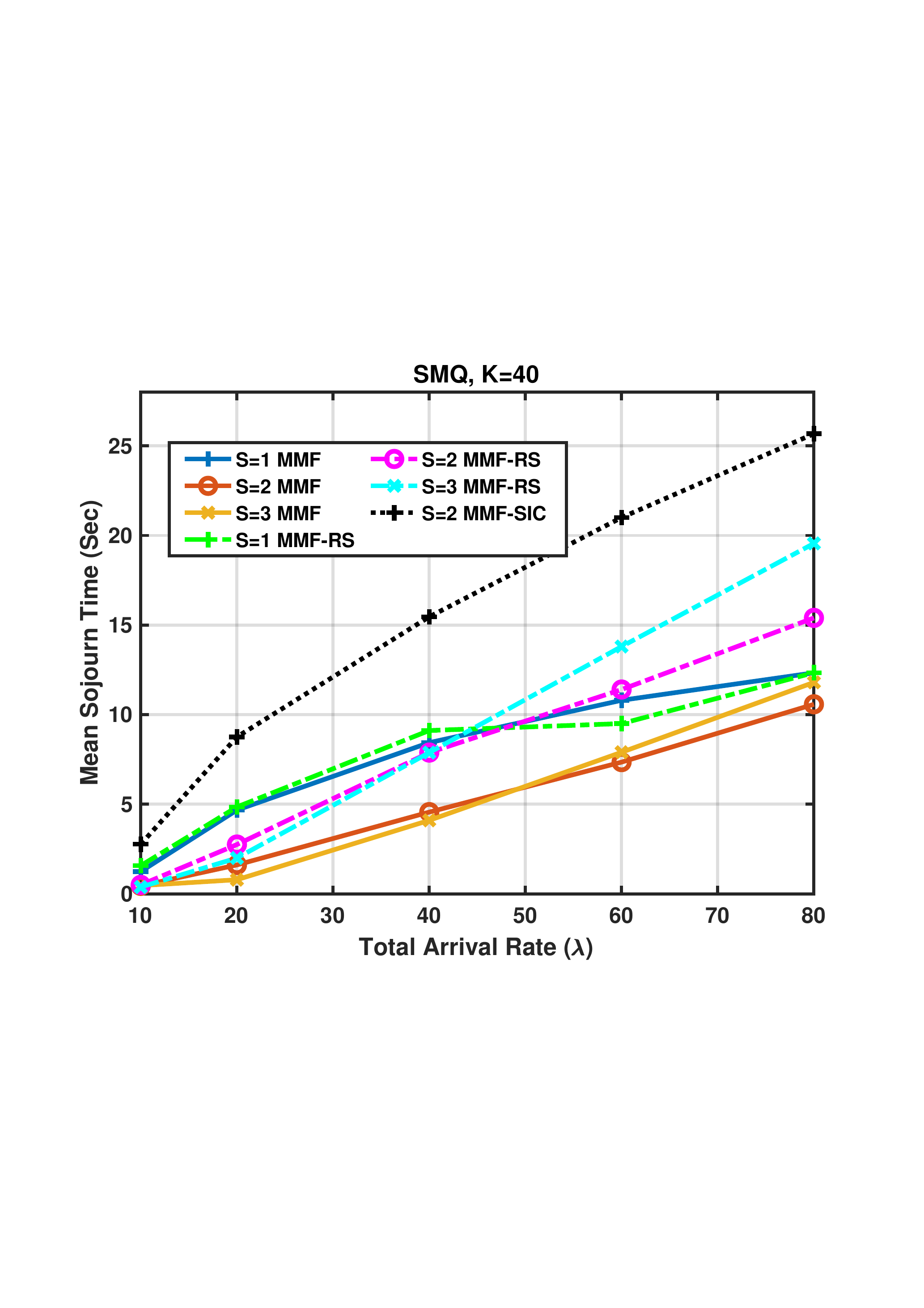}
\caption{Comparison of Beamforming schemes (Homogeneous Case) with SMQ: $K=40,\ P=10,\ N=100,\ N_0=1,\ \gamma=1,\ g=1$}
\label{fig:user_40}
\end{figure}

\textit{Case 1 (Homogeneous Channel Statistics):} \CE We consider a MISO network with $L=16$ antennas, $N=100$ files, each file of size $F=100Mb$ and system bandwidth $B=100MHz$. Channels between each antenna and users are i.i.d complex Gaussian with mean fading $g=1$. The popularity of files follow Zipf distribution with popularity $\gamma=1$. This is a common assumption and is shown to reflect the content request traffic in servers such as youtube \cite{itube}.  We fix $P=10$ and assume that the average noise power $\sigma_k^2=N_0=1,\ \forall k\in[K]$ in all our systems. To cater for all scenarios (low, medium and high traffic) we consider systems with $K=10, 40$ users, and the arrival rates $\lambda=10,20,40,60,80$ files/sec. The first case of $K=10$ represents low load scenario. Since the number of antennas are more than the number of users, the beamformer has higher degrees of freedom to null the inter-stream interference (if any) in all kinds of traffic. However, to bring in the effect of queueing we also look at the different arrival rates $\lambda=10,20,40,60,80$. The second case of $K=40$ represents moderate to high load condition, depending on $\lambda$. Here the total active users (for each file in the queue) may actually be less than the total number of antennas, when the traffic is low (e.g., $\lambda=10,20$) and greater when the traffic is high $\lambda=40,60,80$. %This phenomenon is more pronounced when $K=40$, which represents the heavy load scenario.   

%%%% Modifying for ICC Removing 20 user references .....
Figures \ref{fig:user_10} and \ref{fig:user_40} show the comparison of mean sojourn times for different schemes for different arrival rates for cases $K=10$ and $40$ respectively. \CE %The conclusions for $K=20$ are similar to that of $K=40$ and are not presented here due to space constraints (refer to \cite{report} for $K=20$)\CE. 
The first observation we make is that, increasing the number of streams ($S\geq2$) is beneficial for SMQ MMF and SMQ MMF-RS, only in low and moderate arrival rates. Compared to $S=1$, both $S=2,3$ provide around $50-75\%$ improvement for $\lambda=10,20$ and $30-50\%$ improvement for $\lambda=40$. At very high traffic $\lambda=80$, increasing $S$ provides negligible gain. The reason is that at very high traffic each file entry in the queue has enough requests to provide multicast opportunities and hence adding more streams provides no advantage. However at lower and medium arrival rates the spatial multiplexing gains are provided by $S=2,3$ in addition to the multicast gain provided by the SMQ. Note that the performance reduction of $S=2,3$ streams is also due to the fact that there may exist common users in $S$ groups which may limit the rate, thus reducing the multiplexing gain. 

\CE Our second observation is that the MMF-SIC is beneficial only when the total number of users are less than the total number of antennas, ($K=10,\ L=16$)\CE. 

Further we make another important observation that MMF-RS beamforming in SMQ performs similar to (or) worse than MMF beamforming case for all cases of $K=10,40$ and $S=2,3$. \CE For $S=1$ the performances of MMF and MMF-RS are similar\CE. This is because of the optimization of the max transmit time (\ref{eq:opt_RS}) among the degraded and designated streams, which is inevitable in queued systems such as SMQ\CG. This result is in stark contrast to what is observed in a system without queues \cite{RS2User,RS2017} where, RS based beamforming always does better than any other beamforming scheme. However we will see in the following that MMF-RS provides significant advantage in the heterogenous case, even in a queued system\CE.%, even with the max transmit time optimization between two types of streams\CE.     

\textit{Case 2 (Heterogeneous Channel Statistics):} \CE %Since these MMF and MMF-RS has similar performances at different arrival rate, load regimes and are better than MMF-SIC in homogenous case, we will consider only these two schemes for analyzing the performance in heterogenous statistics case. 
We consider only MMF and MMF-RS in this section (MMF-SIC performs poorly in heterogenous case as well. Hence, we do not present it here for the sake of clarity of presentation). We consider the system with $K=40$ users among which $K_G=20$ are good users and the rest $K_B=20$ are bad users. The mean fading for good users is $g_k=0dB,\ \forall\ k\in[K_G]$ and for bad users is $g_k=-15dB,\ \forall\ k\in [K]\setminus[K_G]$. \CE In other words bad users undergo 15dB deeper fading than the good users. In practical systems this does happen. All the other parameters stay same\CE. We compare the performances of both SMQ and DSMQ in terms of mean sojourn times experienced by good users in Figure \ref{fig:DSMQG} and bad users in Figure \ref{fig:DSMQB}. 
\begin{figure}
\centering
\includegraphics[height=5.5cm,width=8cm,trim={1cm 7cm 1cm 8.7cm},clip]{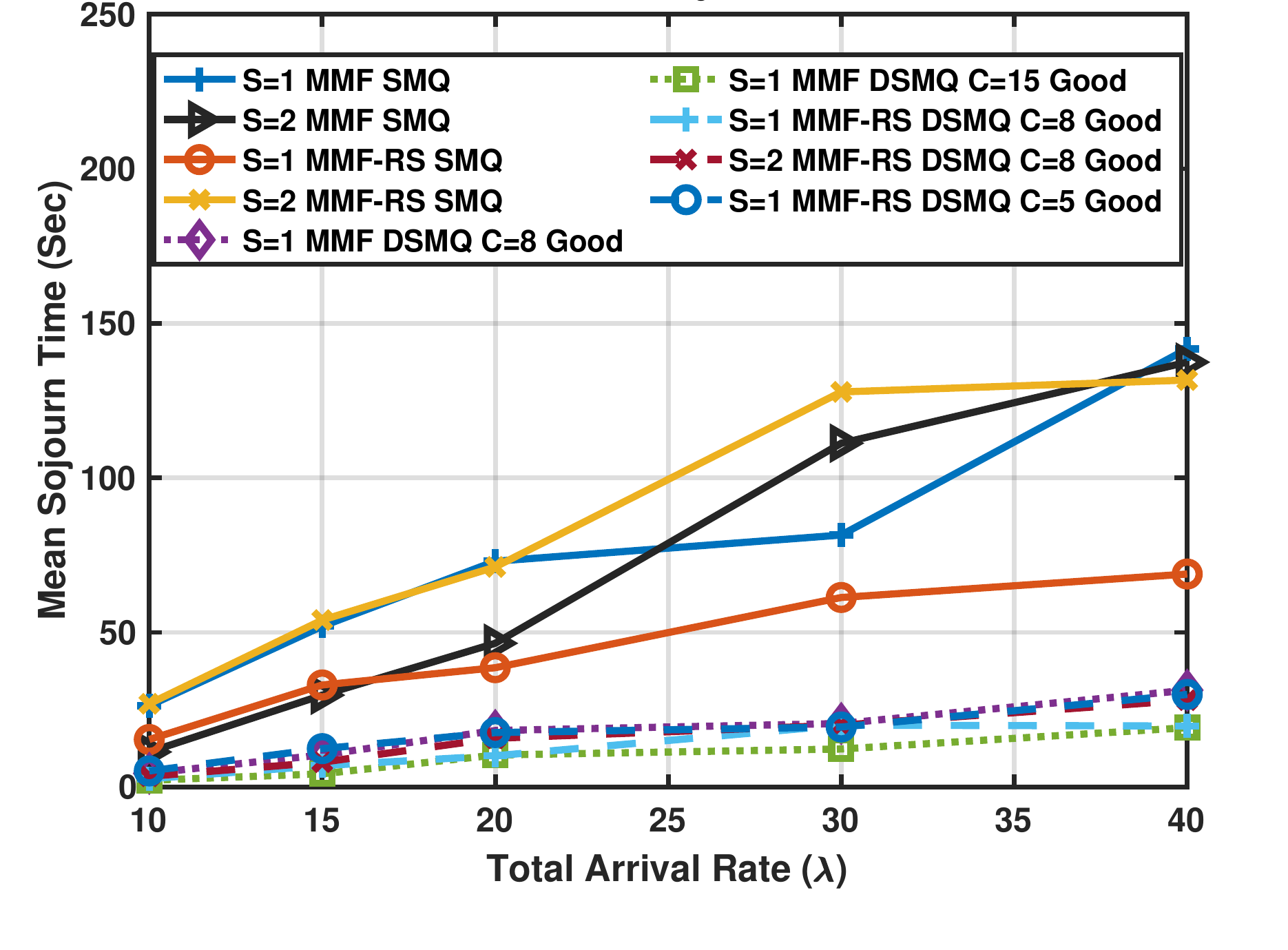}
\caption{Comparison of Beamforming schemes (Heterogeneous Case) with SMQ and DSMQ for Good users: $K=40,\ K_B=K_G=20,\ P=10,\ N=100,\ N_0=1,\ \gamma=1$, }
\label{fig:DSMQG}
\end{figure}
\begin{figure}[!h]
\centering
\includegraphics[height=5.5cm,width=8cm,trim={1cm 7cm 1cm 8.7cm},clip]{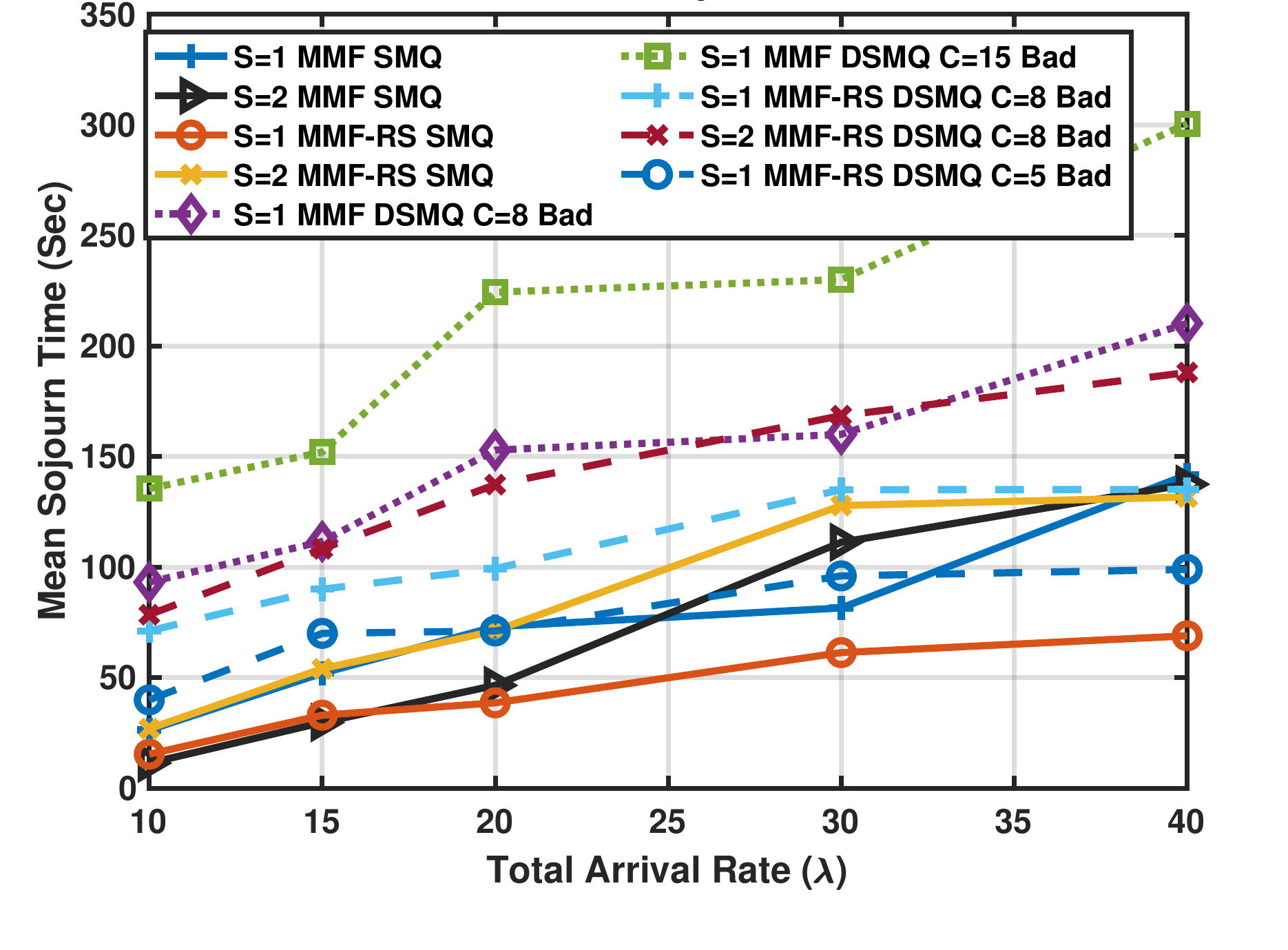}
\caption{Comparison of Beamforming schemes (Heterogeneous Case) with SMQ and DSMQ for Bad users: $K=40,\ K_B=K_G=20,\ P=10,\ N=100,\ N_0=1,\ \gamma=1$}
\label{fig:DSMQB}
\end{figure}

As explained before, in SMQ both good and bad users experience the same mean sojourn time. When we compare MMF SMQ $S=1$ for $\lambda=40$ in Figures \ref{fig:DSMQG}, \ref{fig:DSMQB} with Figure \ref{fig:user_40}, we see that the presence of bad users increases the mean sojourn time of all users (good and bad), from $7$ to $145$ secs. This is a significant degradation\CE. Further, in contrast to the homogeneous case, MMF-RS SMQ $S=1$ improves the mean sojourn time to $70$ secs, in Figure \ref{fig:DSMQG}.  Nevertheless, the delay is still significantly high\CE. Now consider MMF DSMQ with $C=8$, and $C=15$. We see from Figure \ref{fig:DSMQG} that $C=15$ mostly recovers the mean sojourn time for good users to $\sim 20$ secs as compared to $\sim 7$ secs for MMF SMQ in Figure \ref{fig:user_40}. However, the delay of bad users is severely degraded to $300$ secs. This is not desirable. Setting $C=8$ improves this situation by providing mean sojourn time of $30$ secs to good users and $210$ secs to bad users. Thus, $C$ can be tuned to get desirable fairness. 

Next, in Figure \ref{fig:DSMQB} we observe that for $C=8$, MMF-RS DSMQ further reduces the mean sojourn time of bad channel users from $210$ secs for MMF DSMQ to $135$ secs, for $\lambda=40$. This also results in a slight improvement of $5$ secs for the good users, Figure \ref{fig:DSMQG}. %Even though in case of SQM for $\lambda=40$, MMF-RS provides no improvement to MMF for good channel users, see Figures \ref{fig:user_40}, \ref{fig:DSMQG}. The huge gain provided by DSMQ MMF-RS, $C=8$ for bad channel users results in improvement of $5$ secs for good channel users a well. This is due to some residual coupling between SMQ-G and SMQ-B. As seen before, for $C=15$, this coupling reduces with increase in $C$. 
Further, fine tuning of MMF-RS DSMQ with $C=5$ controls the fair allocation of QoS to good and bad channel users, resulting in delays of $30$ and $100$ secs, respectively, for $\lambda=40$. We can see similar trends for $\lambda=10,20$ in Figures \ref{fig:DSMQG} and \ref{fig:DSMQB}. Finally we remark that increasing $S$ ($\geq 2$) is only beneficial in MMF SMQ for $\lambda=10,20$ for reasons similar to the homogeneous case.

\CE Such QoS allocation by DSMQ is quite useful in practical CCN networks to prevent situations where bad users might restrict good users from watching HD content\CG. We now compare DSMQ to other content centric queues like Loopback (\cite{TWC2021}) which was designed for the same purpose (providing optimal fairness).

\textit{DSMQ vs Loopback:} \CG In Figures \ref{fig:DSMQG} we compare the performance of the Loopback scheme proposed in \cite{TWC2021} (Defer scheme \cite{TWC2021} in MISO case also has similar performance. The results are not presented here for the sake of brevity). In the Loopback scheme, we use SMQ and $P1$ MMF beamforming with the following modification. We fix a rate threshold $r_{thresh}$ and transmit at this fixed threshold. We solve $P1$ and obtain rates $R_k$ for all users. Now, the users with rate $R_k\geq r_{thresh}$ only, are successfully served. Rest of the user requests are looped back to the end of the queue. Here, we fix the rate threshold $r_{thresh}=0.5$. This is chosen by trial and error to minimise the overall mean sojourn time. We see in Figure \ref{fig:DSMQG} that Loopback scheme with MMF beamforming, $S=1$, gives a good improvement for the good channel users, for arrival rate $40$, compared to MMF SMQ $S=1$, while slightly degrading the performance for the bad users. However, we note that this improvement is not as good as DSMQ based schemes. Further, we note that Loopback provides no gain for arrival rates $10$ to $30$. This is majorly because the MMF beamforming in a MISO system optimises the transmission rates to users with bad channels. Thus, working with single queues like SMQ and Loopback inherently penalises severely the users with good channels\CG.  

From these simulations it is clear that DSMQ beats SMQ and Loopback's performance in the heterogenous case. Therefore we can conclude that the choice of queueing in a MISO system is an important problem and that the QoS can be significantly improved with careful design.

%Since MMF with $S=1$ performs (OMA) better through out the loaded condition we consider only MMF beamforming scheme with $S=1$, for evaluation of DSMQ. We fix $P=12.5$, $\sigma_k^2=1\forall k$ for all our systems. In this system we have a total of $K=40$ users among which $K_G=20$ are good users and the rest $K_B=20$ are bad users. The mean fading for good users is $g_G=0dB$ and for bad users is $g_G=-15dB$. In other words bad users undergo 15dB deeper fading than the good users. For DSMQ we fix $C=5$. For DSMQ with power control we fix $P_G=3$ and $P_B=40$.

\textit{RL based Power Control in time:} Finally we evaluate the performance of reinforcement learning based power control in time using Adaptive Constrained Deep Q-Network (AC-DQN) algorithm, \cite{Arxiv2019}, in a multi-antenna system with SMQ, for the heterogeneous case for $K=40$ and $\lambda=40$. Towards this, we modify the power constraint in (\ref{eq:opt_re}) as $\sum_{s\in[S]}\norm{\textbf{w}_s}_2^2 \leq P_t$ where $P_t$ is the transmit power of $t^{th}$ transmission. Further we impose an additional long term time average power constraint,  $E[P_t]\leq P$. This constraint is ensured by reinforcement learning based AC-DQN algorithm \cite{Arxiv2019}. %From simulations we see that both AC-DQN and MMF-SMQ without power control achieve similar performance {\CE(see plots in [report reference])}.  
%%%%%%% Commenting figures for ICC%%%%
\begin{figure}[!h]
\centering
\includegraphics[height=5.5cm,width=8cm,trim={1cm 7cm 1cm 7cm},clip]{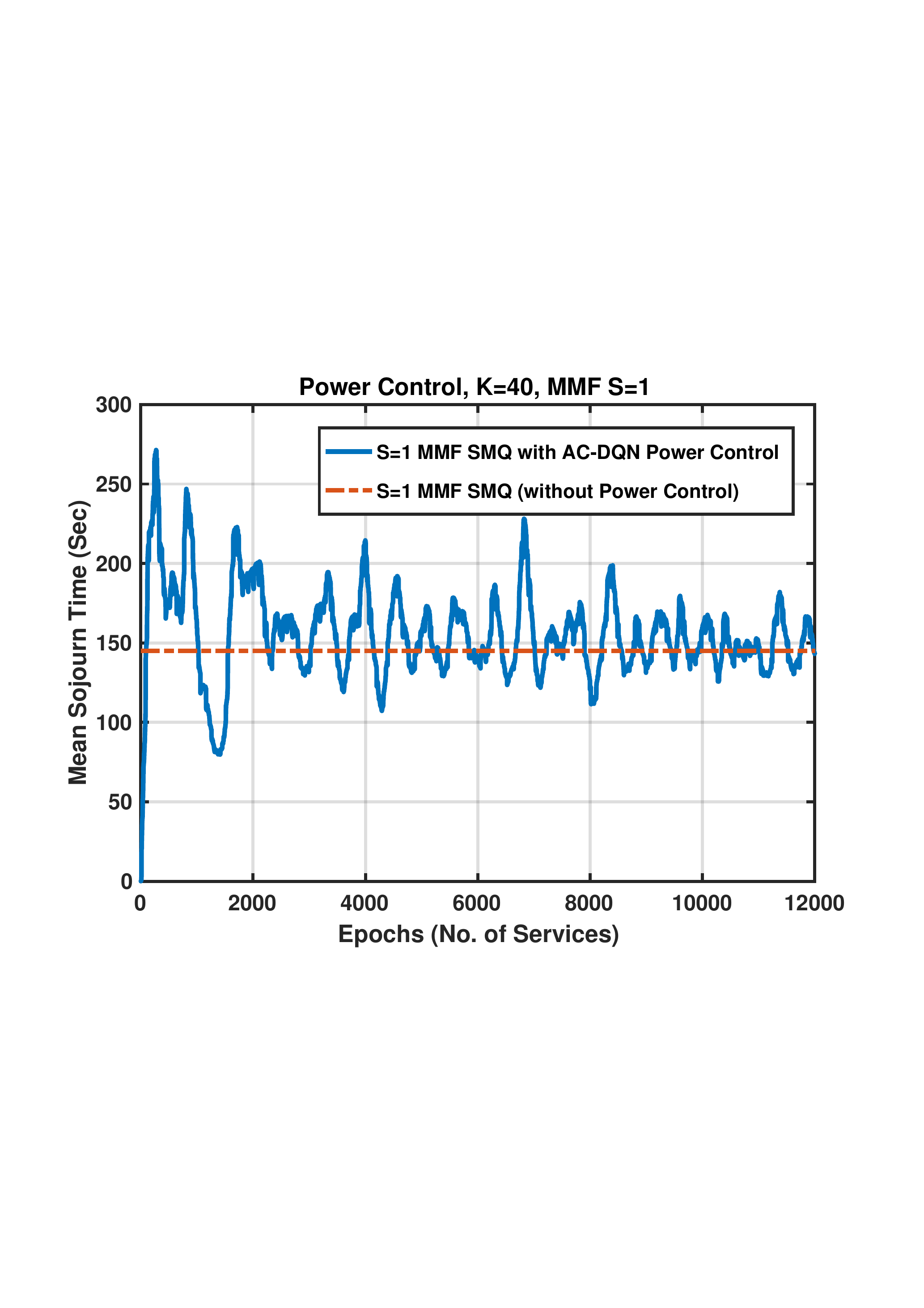}
\caption{ACDQN based power control with SMQ (Heterogeneous Case) MMF S=1: $K=40,\ P=10,\ N=100,\ N_0=1,\ \gamma=1$, $\lambda=40$}
\label{fig:PC}
\end{figure}
%\begin{figure}[!h]
%\centering
%\includegraphics[height=3cm,width=8cm,trim={1cm 11.5cm 1cm 11.5cm},clip]{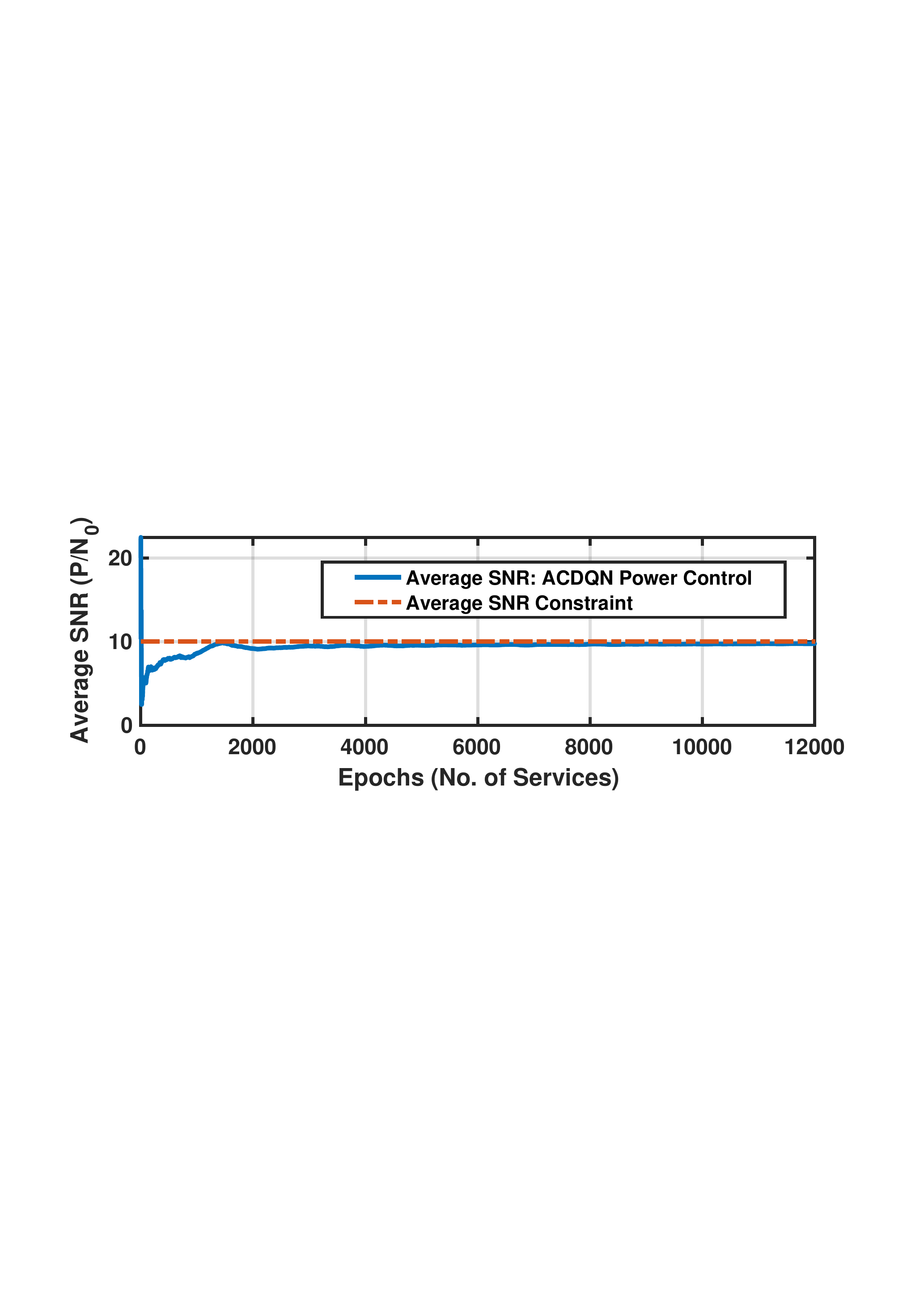}
%\caption{Average Power attained by ACDQN over time (Heterogeneous Case): $K=40,\ P=10,\ N=100,\ N_0=1,\ \gamma=1$, $\lambda=40$}
%\label{fig:avgpow}
%\end{figure}
We see from Figures \ref{fig:PC} that, power control in time for heterogenous user MISO case provides no extra gain. This is in stark contrast to the results obtained in SISO case \cite{Arxiv2019}. This is because spatial diversity in multiple antenna case makes up for most of the time diversity in a SISO systems\CE.

\CR
\textit{Comparison of SMQ with other architectures in Homogeneous Case:}
As mentioned in the introduction, the queueing studies in MIMO literature \cite{huang2012stability,queuestable,Qaware,delayperform,largedevqueue} have invariably considered, different queues for each user. We had argued that this is a suboptimal way of queueing for content-centric networks since such queueing strategies do not utilize the redundancies in the request traffic. In what follows, we show via simulations, that increasing the number of queues to even two in homogeneous case, gives inferior performance, due to reduced multicast opportunities. Therefore, by extension, increasing the number of queues further can only give poorer performance. We show this for a typical case of $K=40$ users and MMF beamforming keeping all other parameters same as in the homogeneous case considered above. The conclusions for other parameters and beamforming cases are similar. 

There are two possibilities, when we consider two queues for the MISO system as mentioned below:
\begin{enumerate}
\item DSMQ MMF (Homogeneous case): Here, similar to the DSMQ queueing scheme, we consider two queues. We call the queues as SMQ1 and SMQ2. We divide the set of users into equal sets and assign SMQ1 for users $\{1,2,\cdots,20\}$ and SMQ2 for users $\{21,22,\cdots,40\}$. Since all users have good channels we set $C=2$, that is equal number of slots are given to both the queues. The beamforming (MMF in this case) and service is provided as explained in Section \ref{sec:DSMQ}, in a slotted manner.
\item 2Q Simultaneous MMF: Another possibility is that, we assign the users as mentioned above and serve the head-of-the-line entries of both the queues simultaneously. That is we consider the set of users in head-of-the-line of SMQ1 and SMQ2, as two groups. We solve optimization problem $P1$ and serve the users in head of the line of both the queues simultaneously as two streams. Here, there are no common users across the two streams. 
\end{enumerate}
Note that to make a fair comparison and give advantage to the two queue systems, we have retained merging in both the queueing systems as described above. Otherwise, as noted in \cite{TWC2021}, without merging, these queues will certainly perform much worse. 

Now, from Figure \ref{fig:opt_user_40}, we see that both the two-queue systems perform poorly compared to SMQ with MMF beamforming. We reiterate that this is because, as we split the queues and the users into groups, the number of multicast opportunities per file go down. As a result the performance is also affected. Hence in a MISO CCN (or any CCN for that matter) it is always a good idea to merge requests for same contents to the maximum extent possible and serve them simultaneously. However, when the channel statistics are heterogenous as seen previously, we also need to consider fairness criterion for good users while performing the merging. Thus the requirement of splitting of queue was valid in the heterogenous case and not so in the homogenous case.  

\begin{figure}[h!]
\centering
\includegraphics[height=5cm,width=8cm,trim={1cm 9cm 1cm 9cm},clip]{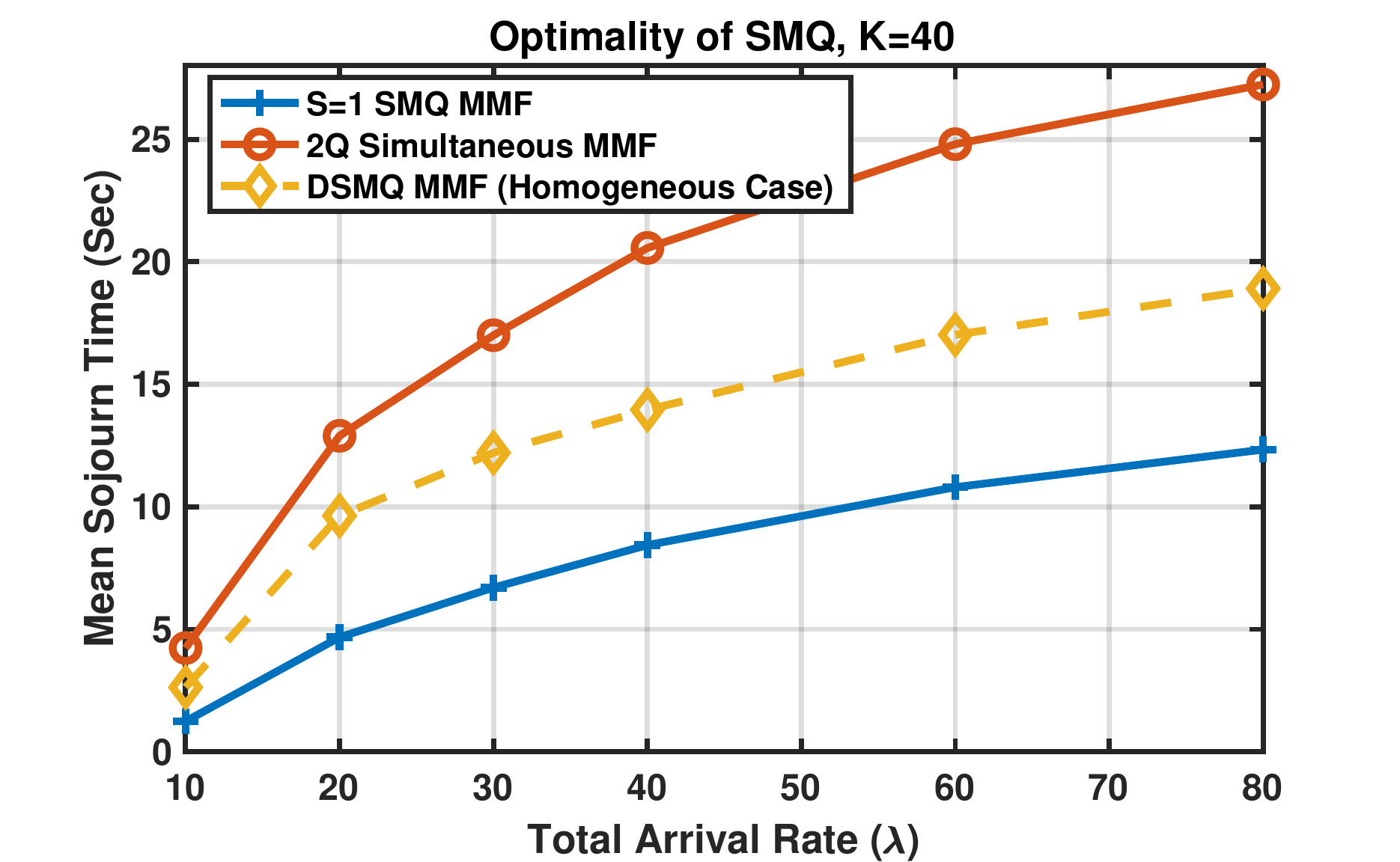}
\caption{Comparison of SMQ with other queues in Homogeneous Case: $K=40,\ P=10,\ N=100,\ N_0=1,\ \gamma=1,\ g=1$}
\label{fig:opt_user_40}
\end{figure}

From these simulations it is clear that MMF-RS DSMQ and SMQ-MMF perform the best in the heterogenous and the homogenous cases, respectively. Therefore we can conclude that the choice of a queueing and a beamforming scheme is a coupled problem and that the QoS and fairness are cross layer objectives which can be significantly improved with careful design.

\section{Conclusion and Future work}
\label{sec:conclusion}
\CE We have considered practical adaptations of beamforming strategies in a MISO CCN with a queue and evaluated their performance.% in a scenario where the BS has a queue. %We have provided an important observation that the performance of the beamforming schemes may drastically vary under such a practical setup, compared to the ones reported in the literature. 
We show that the Simple Multicast Queue (SMQ) can be adapted to such a MISO setup. For homogenous channel case, we show that the SMQ combined with the simplest MMF beamforming scheme performs the best. This is in contrast to the results in \cite{RS2017,RS2User} where Rate Splitting (RS) performs the best. Thus the complexities of RS can be avoided in homogeneous case. 

Further, we have identified SMQ's shortcomings in heterogenous user channel case. We have proposed a new simple queueing scheme called Dual Simple Multicast Queue (DSMQ) which gives flexibility in allocating different QoS for users with good and bad channels. Here, we have shown that MMF-RS DSMQ performs the best, among all schemes. We have also shown that power control and loopback schemes in \cite{TWC2021,Arxiv2019}, are ineffective in MISO setup. %, but, not so when the user channels have similar statistics. Finally we conclude that MMF-RS when combined with DSMQ provides significant advantage of giving good QoS to good channel users without severely penalising users with bad channel. 
\CG We have also proved the stationarity of the queues and have shown that they are always stable across all arrival rates\CE. We have provided queueing theoretic approximations to the mean sojourn time, which could be very useful in cross-layer analysis of Multiuser MISO CCNs with queues. 
Finally we conclude that the selection of the queueing strategy and beamforming is a coupled problem. %and can give drastically different performance in a system with queueing. 
The pairs (SMQ, MMF) and (DSMQ, MMF-RS) are optimal strategies for homogeneous and heterogeneous cases respectively, among the ones considered in this paper\CG. Future work may include analysis of the effect of user movements and imperfect CSIT etc., on queueing delays\CE. 
\bibliographystyle{IEEEtran}
\bibliography{library}

% Generated by IEEEtran.bst, version: 1.14 (2015/08/26)
\begin{thebibliography}{10}
\providecommand{\url}[1]{#1}
\csname url@samestyle\endcsname
\providecommand{\newblock}{\relax}
\providecommand{\bibinfo}[2]{#2}
\providecommand{\BIBentrySTDinterwordspacing}{\spaceskip=0pt\relax}
\providecommand{\BIBentryALTinterwordstretchfactor}{4}
\providecommand{\BIBentryALTinterwordspacing}{\spaceskip=\fontdimen2\font plus
\BIBentryALTinterwordstretchfactor\fontdimen3\font minus
  \fontdimen4\font\relax}
\providecommand{\BIBforeignlanguage}[2]{{%
\expandafter\ifx\csname l@#1\endcsname\relax
\typeout{** WARNING: IEEEtran.bst: No hyphenation pattern has been}%
\typeout{** loaded for the language `#1'. Using the pattern for}%
\typeout{** the default language instead.}%
\else
\language=\csname l@#1\endcsname
\fi
#2}}
\providecommand{\BIBdecl}{\relax}
\BIBdecl

\bibitem{CVNI}
Cisco, ``Cisco visual networking index: global mobile data traffic forecast
  update 2016-2021 white paper,'' \emph{Cisco}, 2016.

\bibitem{itube}
M.~Cha \emph{et~al.}, ``I tube, you tube, everybody tubes: analyzing the
  world's largest user generated content video system,'' in \emph{ACM IMC},
  2007.

\bibitem{Netflix}
M.~{Laterman} \emph{et~al.}, ``A campus-level view of netflix and twitch:
  Characterization and performance implications,'' in \emph{SPECTS}, 2017.

\bibitem{MMMISO}
E.~Karipidis, N.~D. Sidiropoulos, and Z.-Q. Luo, ``Quality of service and
  max-min fair transmit beamforming to multiple cochannel multicast groups,''
  \emph{IEEE Transactions on Signal Processing}, vol.~56, no.~3, pp.
  1268--1279, 2008.

\bibitem{RS2User}
B.~Clerckx, Y.~Mao, R.~Schober, and H.~V. Poor, ``Rate-splitting unifying sdma,
  oma, noma, and multicasting in miso broadcast channel: A simple two-user rate
  analysis,'' \emph{IEEE WC Letters}, vol.~9, 2020.

\bibitem{RS2017}
H.~Joudeh and B.~Clerckx, ``Rate-splitting for max-min fair multigroup
  multicast beamforming in overloaded systems,'' \emph{IEEE Trans. on Wireless
  Comm.}, vol.~16, no.~11, pp. 7276--7289, 2017.

\bibitem{MaliMISO}
N.~Naderializadeh, M.~A. Maddah-Ali, and A.~S. Avestimehr, ``Fundamental limits
  of cache-aided interference management,'' \emph{IEEE Transactions on
  Information Theory}, vol.~63, no.~5, pp. 3092--3107, 2017.

\bibitem{KhalajMISO}
A.~Tölli, S.~P. Shariatpanahi, J.~Kaleva, and B.~H. Khalaj, ``Multi-antenna
  interference management for coded caching,'' \emph{IEEE Trans. on Wireless
  Comm.}, vol.~19, pp. 2091--2106, 2020.

\bibitem{precodersurvey}
M.~Alodeh \emph{et~al.}, ``Symbol-level and multicast precoding for multiuser
  multiantenna downlink: A state-of-the-art, classification, and challenges,''
  \emph{IEEE Comm. Surveys Tutorials}, 2018.

\bibitem{crosslayer}
C.~Anton-Haro, P.~Svedman, M.~Bengtsson, A.~Alexiou, and A.~Gameiro,
  ``Cross-layer scheduling for multi-user mimo systems,'' \emph{IEEE
  Communications Magazine}, vol.~44, no.~9, pp. 39--45, 2006.

\bibitem{RSCom}
A.~Z. Yalcin, M.~Yuksel, and B.~Clerckx, ``Rate splitting for multi-group
  multicasting with a common message,'' \emph{IEEE Trans. on Vehiclular Tech.},
  vol.~69, pp. 12\,281--12\,285, 2020.

\bibitem{mimosurvey}
E.~Castañeda, A.~Silva, A.~Gameiro, and M.~Kountouris, ``An overview on
  resource allocation techniques for multi-user mimo systems,'' \emph{IEEE
  Communications Surveys Tutorials}, vol.~19, no.~1, pp. 239--284, 2017.

\bibitem{mrmu}
B.~Bellalta, V.~Daza, and M.~Oliver, ``An approximate queueing model for
  multi-rate multi-user mimo systems,'' \emph{IEEE Communications Letters},
  vol.~15, no.~4, pp. 392--394, 2011.

\bibitem{queuestable}
M.~Deghel, M.~Assaad, M.~Debbah, and A.~Ephremides, ``Queueing stability and
  csi probing of a tdd wireless network with interference alignment,''
  \emph{IEEE Trans. on Info. Theory}, vol.~64, pp. 547--576, 2018.

\bibitem{Qaware}
Z.~Xie and W.~Chen, ``A joint channel and queue aware scheduling method for
  multi-user massive mimo systems,'' in \emph{IEEE ICC}, 2019.

\bibitem{huang2012stability}
K.~Huang and V.~K. Lau, ``Stability and delay of zero-forcing sdma with limited
  feedback,'' \emph{IEEE transactions on information theory}, vol.~58, no.~10,
  pp. 6499--6514, 2012.

\bibitem{delayperform}
J.~Arnau and M.~Kountouris, ``Delay performance of miso wireless
  communications,'' in \emph{IEEE WiOpt}, 2018, pp. 1--8.

\bibitem{largedevqueue}
J.~Chen and V.~K.~N. Lau, ``Large deviation delay analysis of queue-aware
  multi-user mimo systems with two-timescale mobile-driven feedback,''
  \emph{IEEE Trans. on Signal Proc.}, 2013.

\bibitem{TWC2021}
M.~Panju, R.~Raghu \emph{et~al.}, ``Queueing theoretic models for uncoded and
  coded multicast wireless networks with caches,'' \emph{IEEE TWC}, 2021.

\bibitem{zhou2017optimal}
B.~Zhou, Y.~Cui, and M.~Tao, ``Optimal dynamic multicast scheduling for
  cache-enabled content-centric wireless networks,'' \emph{IEEE Trans. on
  Comm.}, vol.~65, pp. 2956--2970, 2017.

\bibitem{Arxiv2019}
R.~{Raghu}, P.~{Upadhyaya}, M.~{Panju}, V.~{Agarwal}, and V.~{Sharma}, ``Deep
  reinforcement learning based power control for wireless multicast systems,''
  in \emph{57th Annual Allerton Conf. on CCC.}, 2019.

\bibitem{panju19}
M.~Panju, R.~Raghu, V.~Agarwal, V.~Sharma, and R.~Ramachandran, ``Queuing
  theoretic models for multicasting under fading,'' in \emph{2019 IEEE Wireless
  Communications and Networking Conference (WCNC)}, 2019, pp. 1--6.

\bibitem{Asmussen}
S.~Asmussen, \emph{Applied Probability and Queues}, 2nd~ed.\hskip 1em plus
  0.5em minus 0.4em\relax Springer-Verlag New York, 2003.

\bibitem{report}
R.~Raghu, M.~Panju, and V.~Sharma, ``Performance of queueing models for miso
  content-centric networks,'' \emph{arXiv:2111.06352}, 2021.

\bibitem{kallenberg}
O.~Kallenberg, \emph{Foundations of Modern Probability}.\hskip 1em plus 0.5em
  minus 0.4em\relax Springer Science \& Business Media, 2006.

\bibitem{takagibook}
W.~Yue, Y.~Takahashi, and H.~Takagi, \emph{Advances in queueing theory and
  network applications}.\hskip 1em plus 0.5em minus 0.4em\relax Springer, 2009.

\bibitem{fhurmann}
S.~W. Fuhrmann and Y.~Wang, ``Analysis of cyclic service systems with limited
  service: bounds and approximations,'' \emph{Performance Evaluation}, vol.~9,
  no.~1, pp. 35--54, 1988.

\bibitem{boxma}
S.~Borst and O.~Boxma, ``Polling: past, present, and perspective,'' \emph{Top},
  vol.~26, no.~3, pp. 335--369, 2018.

\bibitem{takagi}
H.~Takagi, ``Queuing analysis of polling models,'' \emph{ACM Computing Surveys
  (CSUR)}, vol.~20, no.~1, pp. 5--28, 1988.

\end{thebibliography}


% Generated by IEEEtran.bst, version: 1.14 (2015/08/26)
\begin{thebibliography}{10}
\providecommand{\url}[1]{#1}
\csname url@samestyle\endcsname
\providecommand{\newblock}{\relax}
\providecommand{\bibinfo}[2]{#2}
\providecommand{\BIBentrySTDinterwordspacing}{\spaceskip=0pt\relax}
\providecommand{\BIBentryALTinterwordstretchfactor}{4}
\providecommand{\BIBentryALTinterwordspacing}{\spaceskip=\fontdimen2\font plus
\BIBentryALTinterwordstretchfactor\fontdimen3\font minus
  \fontdimen4\font\relax}
\providecommand{\BIBforeignlanguage}[2]{{%
\expandafter\ifx\csname l@#1\endcsname\relax
\typeout{** WARNING: IEEEtran.bst: No hyphenation pattern has been}%
\typeout{** loaded for the language `#1'. Using the pattern for}%
\typeout{** the default language instead.}%
\else
\language=\csname l@#1\endcsname
\fi
#2}}
\providecommand{\BIBdecl}{\relax}
\BIBdecl

\bibitem{CVNI}
Cisco, ``Cisco visual networking index: global mobile data traffic forecast
  update 2016-2021 white paper,'' \emph{Cisco}, 2016.

\bibitem{Netflix}
M.~{Laterman} \emph{et~al.}, ``A campus-level view of netflix and twitch:
  Characterization and performance implications,'' in \emph{SPECTS}, 2017.

\bibitem{RS2User}
B.~Clerckx, Y.~Mao, R.~Schober, and H.~V. Poor, ``Rate-splitting unifying sdma,
  oma, noma, and multicasting in miso broadcast channel: A simple two-user rate
  analysis,'' \emph{IEEE WC Letters}, vol.~9, 2020.

\bibitem{RS2017}
H.~Joudeh and B.~Clerckx, ``Rate-splitting for max-min fair multigroup
  multicast beamforming in overloaded systems,'' \emph{IEEE Trans. on Wireless
  Comm.}, vol.~16, no.~11, pp. 7276--7289, 2017.

\bibitem{MaliMISO}
N.~Naderializadeh, M.~A. Maddah-Ali, and A.~S. Avestimehr, ``Fundamental limits
  of cache-aided interference management,'' \emph{IEEE Transactions on
  Information Theory}, vol.~63, no.~5, pp. 3092--3107, 2017.

\bibitem{KhalajMISO}
A.~Tölli, S.~P. Shariatpanahi, J.~Kaleva, and B.~H. Khalaj, ``Multi-antenna
  interference management for coded caching,'' \emph{IEEE Trans. on Wireless
  Comm.}, vol.~19, pp. 2091--2106, 2020.

\bibitem{precodersurvey}
M.~Alodeh \emph{et~al.}, ``Symbol-level and multicast precoding for multiuser
  multiantenna downlink: A state-of-the-art, classification, and challenges,''
  \emph{IEEE Comm. Surveys Tutorials}, 2018.

\bibitem{RSCom}
A.~Z. Yalcin, M.~Yuksel, and B.~Clerckx, ``Rate splitting for multi-group
  multicasting with a common message,'' \emph{IEEE Trans. on Vehiclular Tech.},
  vol.~69, pp. 12\,281--12\,285, 2020.

\bibitem{mimosurvey}
E.~Castañeda, A.~Silva, A.~Gameiro, and M.~Kountouris, ``An overview on
  resource allocation techniques for multi-user mimo systems,'' \emph{IEEE
  Communications Surveys Tutorials}, vol.~19, no.~1, pp. 239--284, 2017.

\bibitem{Qaware}
Z.~Xie and W.~Chen, ``A joint channel and queue aware scheduling method for
  multi-user massive mimo systems,'' in \emph{IEEE ICC}, 2019.

\bibitem{largedevqueue}
J.~Chen and V.~K.~N. Lau, ``Large deviation delay analysis of queue-aware
  multi-user mimo systems with two-timescale mobile-driven feedback,''
  \emph{IEEE Trans. on Signal Proc.}, 2013.

\bibitem{queuestable}
M.~Deghel, M.~Assaad, M.~Debbah, and A.~Ephremides, ``Queueing stability and
  csi probing of a tdd wireless network with interference alignment,''
  \emph{IEEE Trans. on Info. Theory}, vol.~64, pp. 547--576, 2018.

\bibitem{delayperform}
J.~Arnau and M.~Kountouris, ``Delay performance of miso wireless
  communications,'' in \emph{IEEE WiOpt}, 2018, pp. 1--8.

\bibitem{zhou2017optimal}
B.~Zhou, Y.~Cui, and M.~Tao, ``Optimal dynamic multicast scheduling for
  cache-enabled content-centric wireless networks,'' \emph{IEEE Trans. on
  Comm.}, vol.~65, pp. 2956--2970, 2017.

\bibitem{TWC2021}
M.~Panju, R.~Raghu \emph{et~al.}, ``Queueing theoretic models for uncoded and
  coded multicast wireless networks with caches,'' \emph{IEEE TWC}, 2021.

\bibitem{Arxiv2019}
R.~{Raghu}, P.~{Upadhyaya}, M.~{Panju}, V.~{Agarwal}, and V.~{Sharma}, ``Deep
  reinforcement learning based power control for wireless multicast systems,''
  in \emph{57th Annual Allerton Conf. on CCC.}, 2019.

\bibitem{cvxpy}
S.~Diamond and S.~Boyd, ``{CVXPY}: {A} {P}ython-embedded modeling language for
  convex optimization,'' \emph{JMLR}, 2016.

\bibitem{Asmussen}
S.~Asmussen, \emph{Applied Probability and Queues}, 2nd~ed.\hskip 1em plus
  0.5em minus 0.4em\relax Springer-Verlag New York, 2003.

\bibitem{kallenberg}
O.~Kallenberg, \emph{Foundations of Modern Probability}.\hskip 1em plus 0.5em
  minus 0.4em\relax Springer Science \& Business Media, 2006.

\bibitem{takagibook}
W.~Yue, Y.~Takahashi, and H.~Takagi, \emph{Advances in queueing theory and
  network applications}.\hskip 1em plus 0.5em minus 0.4em\relax Springer, 2009.

\bibitem{fhurmann}
S.~W. Fuhrmann and Y.~Wang, ``Analysis of cyclic service systems with limited
  service: bounds and approximations,'' \emph{Performance Evaluation}, vol.~9,
  no.~1, pp. 35--54, 1988.

\bibitem{boxma}
S.~Borst and O.~Boxma, ``Polling: past, present, and perspective,'' \emph{Top},
  vol.~26, no.~3, pp. 335--369, 2018.

\bibitem{takagi}
H.~Takagi, ``Queuing analysis of polling models,'' \emph{ACM Computing Surveys
  (CSUR)}, vol.~20, no.~1, pp. 5--28, 1988.

\bibitem{itube}
M.~Cha \emph{et~al.}, ``I tube, you tube, everybody tubes: analyzing the
  world's largest user generated content video system,'' in \emph{ACM IMC},
  2007.

\end{thebibliography}
\end{document}